\documentclass[11pt]{article}
\pdfoutput=1

% rubber: module pdftex

\usepackage[vmargin=30mm,hmargin=33mm]{geometry}
\usepackage{amsmath,amssymb,amsthm,booktabs,color,doi,enumitem,graphicx,latexsym,url,xcolor,thmtools,thm-restate}
\usepackage[numbers,sort&compress]{natbib}
\usepackage{microtype,hyperref}
\usepackage{comment}

\definecolor{citecolor}{HTML}{0000C0}
\definecolor{urlcolor}{HTML}{000080}

\hypersetup{
    colorlinks=true,
    linkcolor=black,
    citecolor=citecolor,
    filecolor=black,
    urlcolor=urlcolor,
}

\declaretheorem{theorem}
\declaretheorem[numberlike=theorem]{lemma}
\declaretheorem[numberlike=theorem]{corollary}

\newcommand{\namedref}[2]{\hyperref[#2]{#1~\ref*{#2}}}
\newcommand{\sectionref}[1]{\namedref{Section}{#1}}
\newcommand{\theoremref}[1]{\namedref{Theorem}{#1}}
\newcommand{\corollaryref}[1]{\namedref{Corollary}{#1}}
\newcommand{\lemmaref}[1]{\namedref{Lemma}{#1}}

\newcommand{\Z}{\mathbb{Z}}
\newcommand{\card}[1]{\left\lvert {#1} \right\rvert}

\newcommand{\mme}{\rho}
\newcommand{\mmeaux}{\sigma} % this should be a greek letter but not \mme or \omega
\DeclareMathOperator{\trace}{tr}

\setitemize{noitemsep}
\setenumerate{noitemsep,itemsep=1ex}

\pagestyle{plain}

\newcommand*\samethanks[1][\value{footnote}]{\footnotemark[#1]}

\begin{document}
\hypersetup{
    pdfauthor={Keren Censor-Hillel, Petteri Kaski, Janne H.\ Korhonen, Christoph Lenzen, Ami Paz, Jukka Suomela},
    pdftitle={Algebraic Methods in the Congested Clique},
}

\begin{titlepage}
\setcounter{page}{0}
\title{Algebraic Methods in the Congested Clique}

\author{Keren Censor-Hillel\thanks{Department of Computer Science, Technion, \textsf{\{ckeren, amipaz\}@cs.technion.ac.il}.}
\and Petteri Kaski\thanks{Helsinki Institute for Information Technology HIIT \& Department of Information and Computer Science, Aalto University, \textsf{\{petteri.kaski, jukka.suomela\}@aalto.fi}.}
\and Janne H.\ Korhonen\thanks{Helsinki Institute for Information Technology HIIT \& Department of Computer Science, University of Helsinki, \textsf{janne.h.korhonen@helsinki.fi}.}
\and Christoph Lenzen\thanks{Department for Algorithms and Complexity, MPI Saarbr\"{u}cken, \textsf{clenzen@mpi-inf.mpg.de}.}
\and Ami Paz\samethanks[1]
\and Jukka Suomela\samethanks[2]
}
\date{}
\maketitle

\begin{abstract}
In this work, we use algebraic methods for studying distance computation and subgraph detection tasks in the \emph{congested clique} model. Specifically, we adapt parallel matrix multiplication implementations to the congested clique, obtaining an $O(n^{1-2/\omega})$ round matrix multiplication algorithm, where $\omega < 2.3728639$ is the exponent of matrix multiplication. In conjunction with known techniques from centralised algorithmics, this gives significant improvements over previous best upper bounds in the congested clique model.
The highlight results include:
\begin{itemize}
    \item[--] triangle and 4-cycle counting in $O(n^{0.158})$ rounds, improving upon the $O(n^{1/3})$ algorithm of Dolev et al. [DISC 2012],
    \item[--] a $(1 + o(1))$-approximation of all-pairs shortest paths in $O(n^{0.158})$ rounds, improving upon the $\tilde{O} (n^{1/2})$-round $(2 + o(1))$-approximation algorithm of Nanongkai [STOC 2014], and
    \item[--] computing the girth in $O(n^{0.158})$ rounds, which is the first non-trivial solution in this model.
\end{itemize}
In addition, we present a novel constant-round combinatorial algorithm for detecting 4-cycles.

\end{abstract}

\thispagestyle{empty}
\end{titlepage}
\setcounter{page}{1}

% Introduction, results, related work, etc.

%!TEX root = ./0-main.tex

\begin{table}[b!]
\newcommand{\myitem}{\ \ $\cdot$ }
\centering
\begin{tabular*}{\linewidth}{@{}l@{\extracolsep{\fill}}l@{}l@{}r@{}}
\toprule
& \multicolumn{3}{@{}c}{Running time} \\
\cmidrule{2-4}
Problem & This work & Prior work & \\
\midrule
matrix multiplication (semiring) & $O(n^{1/3})$ & --- & \\
matrix multiplication (ring) & $O(n^{0.158})$ & $O(n^{0.373})$ & \cite{drucker13} \\
\midrule
triangle counting & $O(n^{0.158})$ & $O(n^{1/3}/\log n)$ & \cite{tritri} \\
4-cycle detection & $O(1)$ & $O(n^{1/2}/\log n)$ & \cite{tritri} \\
4-cycle counting & $O(n^{0.158})$ & $O(n^{1/2}/\log n)$ & \cite{tritri} \\
$k$-cycle detection & $2^{O(k)} n^{0.158}$ & $O(n^{1-2/k} / \log n)$ & \cite{tritri} \\
girth & $O(n^{0.158})$ & --- & \\
\midrule
weighted, directed APSP & $O(n^{1/3} \log n)$ & --- & \\
\myitem weighted diameter $U$ & $O(Un^{0.158})$ & --- & \\
\myitem $(1+o(1))$-approximation & $O(n^{0.158})$ & --- & \\
\myitem $(2+o(1))$-approximation & & $\tilde{O}(n^{1/2})$ & \cite{nanongkai14} \\
\midrule
unweighted, undirected APSP & $O(n^{0.158})$ & --- & \\
\myitem $(2+o(1))$-approximation & & $\tilde{O}(n^{1/2})$ & \cite{nanongkai14} \\
\bottomrule
\end{tabular*}
\caption{Our results versus prior work, for the currently best known bound $\omega < 2.3729$~\cite{legall2014powers}; $\tilde O$ notation hides polylogarithmic factors.}\label{tab:results}
\end{table}

\section{Introduction}\label{sec:intro}
Algebraic methods have become a recurrent tool in centralised algorithmics, employing a wide range of techniques (e.g.,\ \cite{Bjorklund14,
BjorklundH14,
BjorklundHKK07,
BjorklundHK09,
BjorklundKK13,
BjorklundKK14,
BodlaenderCKN13,
CyganKN13,
CyganNPPRW11,
CzumajL07,
EisenbrandG04,
FominLRSR12,
FominLS14,
Koutis08,
KowalukLL11,
LokshtanovN10,
nevsetvril1985complexity,
Williams09,
vassilevska2013finding}). %
In this paper, we bring techniques from the algebraic toolbox to the aid of distributed computing, by leveraging fast matrix multiplication in the \emph{congested clique} model.

In the congested clique model, the $n$ nodes of a graph $G$ communicate by exchanging messages of $O(\log{n})$ size in a \emph{fully-connected} synchronous network; initially, each node is aware of its neighbours in $G$. In comparison with the traditional CONGEST model \cite{peleg00}, the key difference is that a pair of nodes can communicate directly even if they are not adjacent in graph~$G$. The congested clique model masks away the effect of \emph{distances} on the computation and focuses on the limited \emph{bandwidth}. As such, it has been recently gaining increasing attention~\cite{PemmarajuS14_MST_logloglogn,tritri,drucker13,lenzen2013optimal,lotker05,nanongkai14, LenzenW11,patt-shamir11,hegeman14,HegemanP14}, in an attempt to understand the relative computational power of distributed computing models.

The key insight of this paper is that matrix multiplication algorithms from parallel computing can be adapted to obtain an $O(n^{1-2/\omega})$ round matrix multiplication algorithm in the congested clique, where $\omega < 2.3728639$ is the matrix multiplication exponent~\cite{legall2014powers}. Combining this with well-known centralised techniques allows us to use fast matrix multiplication to solve various combinatorial problems, immediately giving $O(n^{0.158})$-time algorithms in the congested clique for many classical graph problems. Indeed, while most of the techniques we use in this work are known beforehand, their combination gives significant improvements over the best previously known upper bounds. Table~\ref{tab:results} contains a summary of our results, which we overview in more details in what follows.

\subsection{Matrix Multiplication on a Congested Clique}

As a basic primitive, we consider the computation of the product $P = ST$ of two $n \times n$ matrices $S$ and $T$ on a congested clique of $n$ nodes. We will tacitly assume that the matrices are initially distributed so that node $v$ has row $v$ of both $S$ and $T$, and each node will receive row $v$ of $P$ in the end. Recall that the matrix multiplication exponent $\omega$ is defined as the infimum over $\mmeaux$ such that product of two $n \times n$ matrices can be computed with $O(n^{\mmeaux})$ arithmetic operations; it is known that $2 \le \omega < 2.3728639$~\cite{legall2014powers}, and it is conjectured, though not unanimously, that $\omega = 2$.

\begin{restatable}{theorem}{thmmm}\label{thm:mm}
The product of two matrices $n \times n$ can be computed in a congested clique of $n$ nodes in
$O(n^{1/3})$ rounds over semirings. Over rings, this product can be computed in $O(n^{1-2/\omega+\varepsilon})$ rounds for any constant $\varepsilon>0$.
\end{restatable}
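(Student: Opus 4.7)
My plan for both bounds is to partition the product $P=ST$ into blocks, assign one block-level subproduct to each of the $n$ nodes, and show that the resulting traffic fits the $O(1)$-round all-to-all routing primitive of Lenzen, which delivers any communication pattern in which every node sends and receives at most $O(n)$ messages of $O(\log n)$ bits.

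For the semiring claim, I would use the standard three-dimensional decomposition. Partition $[n]$ into $n^{1/3}$ contiguous index groups of size $n^{2/3}$, yielding exactly $n$ block-level subproducts indexed by triples $(i,j,k)\in[n^{1/3}]^3$, each asking for $S_{i,k}T_{k,j}$ on $n^{2/3}\times n^{2/3}$ blocks. Assign one subproduct to each node through any fixed bijection. In the scatter phase, row $v$ of $S$ is needed by the $n^{2/3}$ subproducts whose first index matches the row-block containing $v$, and similarly for $T$, so each node sends and receives $O(n^{4/3})$ scalars; by Lenzen's routing this costs $O(n^{1/3})$ rounds. Each node then multiplies its pair of blocks locally in $O(n^2)$ time, and a symmetric gather-and-sum phase delivers the $O(n^{4/3})$ partial results back to the nodes owning the corresponding rows of $P$, again in $O(n^{1/3})$ rounds.

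For the ring bound, I would replace the cubic decomposition by recursive application of a bilinear algorithm for $q\times q$ matrix multiplication with $R\le q^{\omega+\varepsilon/2}$ essential multiplications, where $q$ is a sufficiently large constant chosen from a known fast matrix multiplication algorithm. After $t$ levels of recursion, the top-level product is expressed as $R^t$ block-level subproducts on blocks of size $(n/q^t)\times(n/q^t)$. Choosing $t$ so that $R^t=\Theta(n)$ gives block side $\Theta(n^{1-1/(\omega+\varepsilon/2)})$ and blocks with $O(n^{2-2/(\omega+\varepsilon/2)})$ entries. The two input blocks to each subproduct are fixed linear combinations of the original blocks of $S$ and $T$, so each source node can assemble its contribution to the relevant outgoing block locally before transmission, keeping both send and receive loads at each node to $O(n^{2-2/(\omega+\varepsilon/2)})$ scalars. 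Lenzen's routing then handles scatter and gather in $O(n^{1-2/(\omega+\varepsilon/2)})\le O(n^{1-2/\omega+\varepsilon})$ rounds, and each node concludes by computing its local block product with any off-the-shelf $O(n^{\omega+\varepsilon})$-time algorithm.

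The main technical point, where I expect to spend the most care, is the routing analysis: verifying that in every phase the load at each node stays within $O(n)$ messages per round under Lenzen's framework, and that the linear combinations prescribed by the bilinear algorithm really can be assembled using only the row of $S$ and $T$ that each source initially holds. A minor issue is that $n$ need not be an exact power of the recursion fan-out; this is absorbed by padding to the next suitable size, costing only constant factors in the round complexity.
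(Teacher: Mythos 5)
Your semiring argument is exactly the paper's 3D algorithm and is fine. The gap is in the ring case, at precisely the point you flag as needing care: the claim that each source node can assemble its contribution to the relevant outgoing block locally and that this keeps both send and receive loads at $O(n^{2-2/\sigma})$ scalars (writing $\sigma=\omega+\varepsilon/2$) is false for the communication pattern you describe. Set $d=\Theta(n^{1/\sigma})$ for the block grid, so blocks are $n/d\times n/d$. The input $\hat S^{(w)}=\sum_{(i,j)\in[d]^2}\alpha_{ijw}S[i,j]$ to the $w$-th subproduct is a sum over \emph{all} $d^2$ blocks, so any fixed row of $\hat S^{(w)}$ draws on $d$ distinct rows of $S$, held by $d$ distinct nodes; no single source can assemble it. If instead each original row-owner sends its partial contribution directly to node $w$, then each owner must send a length-$(n/d)$ vector to each of the $n$ destinations $w$, i.e.\ $n^{2}/d=n^{2-1/\sigma}$ scalars, and each destination receives $d$ unsummed partial vectors per row of $\hat S^{(w)}$, again $n^{2}/d$ scalars. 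Since there is no in-network aggregation in the congested clique, this is $\Theta(n^{2-1/\sigma})$ load per node, i.e.\ $\Theta(n^{1-1/\sigma})\approx n^{0.58}$ rounds for the current $\omega$ --- worse even than your semiring bound. The decoding phase $P[i,j]=\sum_{w}\lambda_{ijw}\hat P^{(w)}$ has the symmetric problem on the receive side.

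The missing idea is an intermediate redistribution so that the linear combinations are \emph{computed before} the expensive scatter. The paper overlays a second, $n^{1/2}\times n^{1/2}$ partition of the matrices, aligned identically across all $d^{2}$ blocks, and in a preliminary constant-round all-to-all the node with label $(x_1,x_2)$ collects the $(x_1,x_2)$-piece of every block of $S$ and of $T$ ($n$ entries in total each, so this fits Lenzen's primitive in $O(1)$ rounds). That node can then locally form the $(x_1,x_2)$-piece of every $\hat S^{(w)}$ and $\hat T^{(w)}$ --- the sum over $(i,j)$ collapses $d^{2}$ pieces into one --- and ships to node $w$ only the already-summed piece of size $(n^{1/2}/d)^{2}=O(n^{1-2/\sigma})$, for a total of $O(n^{2-2/\sigma})$ sent and received per node; decoding is handled symmetrically at the same intermediate nodes before the final rows of $P$ are returned to their owners. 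With this extra step your round bound goes through; without it, the stated load bound does not hold.
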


Theorem~\ref{thm:mm} follows by adapting known parallel matrix multiplication algorithms for semirings~\cite{AgarwalBGJP95_3d,mccoll1995scalable} and rings~\cite{LuoD95_layout_parallel,mccoll1996,tiskin-phd,BallardDHLS12_Strassen_upper} to the clique model, via the routing technique of~\citet{lenzen2013optimal}. In fact, with little extra work one can show that the resulting algorithm is also \emph{oblivious}, that is, the communication pattern is predefined and does not depend on the input matrices.
Hence, the oblivious routing technique of~\citet{tritri} suffice for implementing these matrix multiplication algorithms.

The above addresses matrices whose entries can be encoded with $O(\log n)$ bits, which is sufficient for dealing with integers of absolute value at most $n^{O(1)}$. In general, if $b$ bits are sufficient to encode matrix entries, the bounds above hold with a multiplicative factor of $b / \log n$; for example, working with integers with absolute value at most $2^{n^{\varepsilon}}$ merely incurs a factor $n^{\varepsilon}$ overhead in running times.

%###
\paragraph{Distributed matrix multiplication exponent.}
%###
Analogously with the matrix multiplication exponent, we denote by $\mme$ the exponent of matrix multiplication in the congested clique model, that is, the infimum over all values $\mmeaux$ such that there exists a matrix multiplication algorithm in the congested clique running in $O(n^{\mmeaux})$ rounds. In this notation, \theoremref{thm:mm} gives us
\[ \mme \le 1 - 2/\omega < 0.15715\,;\]
prior to this work, it was known that $\mme \leq \omega-2$~\cite{drucker13}.

For the rest of this paper, we will -- analogously with the convention in centralised algorithmics -- slightly abuse this notation by writing $n^\mme$ for the complexity of matrix multiplication in the congested clique. This hides factors up to $O(n^\varepsilon)$ resulting from the fact that the exponent $\mme$ is defined as infimum of an infinite set.

%###
\paragraph{Lower bounds for matrix multiplication.}
%###
The matrix multiplication results are optimal in the sense that for any sequential matrix multiplication implementation, any scheme for simulating that implementation in the congested clique cannot give a faster algorithm than the construction underlying \theoremref{thm:mm}; this follows from known results for parallel matrix multiplication~\cite{BallardDHS12_strassen_lower, IronyTT04_3d_lower,AggarwalCS90_PRAM,tiskin1998}. Moreover, we note that for the \emph{broadcast congested clique model}, where each node is required to send the same message to all nodes in any given round, recent lower bounds \cite{arXiv:1412.3445} imply that matrix multiplication cannot be done faster than $\tilde\Omega(n)$ rounds. %For details on the lower bounds, see \sectionref{sec:lower-bounds}.

\subsection{Applications in Subgraph Detection}

%###
\paragraph{Cycle detection and counting.}
%###
Our first application of fast matrix multiplication is to the problems of triangle counting~\cite{itai1978finding} and 4-cycle counting. %Indeed, fast matrix multiplication immediately gives a fast triangle counting algorithm also in the congested clique model, when we select the input matrices $S$ and $T$ to be the adjacency matrix of the input graph.

\begin{restatable}{corollary}{thmtriangles}\label{cor:triangles}
    For directed and undirected graphs, the number of triangles and 4-cycles can be computed in $O(n^{\mme})$ rounds.
\end{restatable}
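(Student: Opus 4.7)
The strategy is the classical one from algebraic graph theory: reduce the counting problems to computing $\trace(A^k)$ for $k \in \{3,4\}$, where $A$ is the adjacency matrix of $G$, and then invoke \theoremref{thm:mm}. The key identity is that $\trace(A^k)$ equals the number of closed walks of length $k$ in $G$, which decomposes into a constant multiple of the number of genuine $k$-cycles plus degenerate contributions (closed walks that revisit vertices or traverse edges in both directions). For $k=3$ these degenerate walks vanish for simple graphs; for $k=4$ they are controlled by per-vertex quantities (degrees and, in the directed case, 2-cycle counts at an edge).

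The plan has three steps. First, compute $B = A^2$ via \theoremref{thm:mm} in $O(n^{\mme})$ rounds, leaving node $v$ with row $v$ of $B$. Since node $v$ also knows its row and column of $A$ from its local neighbourhood, it can compute the single diagonal entry $(A^3)_{vv} = \sum_u B_{vu} A_{uv}$ locally. For 4-cycles, compute $C = BB = A^4$ by a second invocation of \theoremref{thm:mm}, so that node $v$ acquires $(A^4)_{vv}$ as part of its row of $C$. Second, aggregate $\trace(A^k) = \sum_v (A^k)_{vv}$ together with the per-vertex degree-based correction terms; each node contributes one scalar, so this aggregation fits in $O(1)$ rounds (for instance via the routing scheme of \citet{lenzen2013optimal} or simply a converge-to-leader step). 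Third, invert the closed-form linear relation between $\trace(A^k)$, the degree/2-cycle data, and the number of $k$-cycles to extract the desired count in each of the four cases (directed/undirected $\times$ triangles/4-cycles).

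The dominant cost is the one or two matrix multiplications, giving the claimed $O(n^{\mme})$ round bound; everything else is either local computation or an $O(1)$-round aggregation of $n$ scalars. The main place where care is needed is the bookkeeping for 4-cycles: one must enumerate and subtract the degenerate closed 4-walks (back-and-forth along a single edge, cherry-shaped walks through a common neighbour, and, for directed graphs, walks traversing a 2-cycle twice). These corrections are standard and, crucially, are linear combinations of quantities each node can compute from its own degree and its row/column of $A$, so they do not disturb the round complexity.
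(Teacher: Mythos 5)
Your proposal is correct and follows essentially the same route as the paper: the trace formulas of Itai--Rodeh for triangles and of Alon--Yuster--Zwick for $4$-cycles (closed $k$-walks minus locally computable degree/2-cycle corrections), evaluated via one or two invocations of \theoremref{thm:mm} plus an $O(1)$-round aggregation. The paper simply cites these formulas and states the corollary as immediate, whereas you additionally spell out the (correct) implementation details.
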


For $\mme \le 1-2/\omega$, this is an improvement upon the previously best known $O(n^{1/3})$-round triangle detection algorithm of \citet{tritri} and an $O(n^{\omega - 2 + \varepsilon})$-round algorithm of \citet{drucker13}. Indeed, we disprove the conjecture of~\citet{tritri} that any deterministic oblivious algorithm for detecting triangles requires $\tilde{\Omega}(n^{1/3})$ rounds.

When only detection of cycles is required, we observe that combining the fast distributed matrix multiplication with the well-known technique of \emph{colour-coding} \cite{alon1995color} allows us to detect $k$-cycles in $\tilde{O}(n^{\mme})$ rounds for any constant $k$. This improves upon the subgraph detection algorithm of Dolev et al.~\cite{tritri}, which requires $\tilde{O}(n^{1-2/k})$ rounds for detecting subgraphs of $k$ nodes. However, we do not improve upon the algorithm of Dolev et al. for general subgraph detection.

\begin{restatable}{theorem}{thmkcycles}\label{thm:k-cycles}
For directed and undirected graphs, the existence of $k$-cycles can be detected in $2^{O(k)} n^{\mme} \log n$ rounds.
\end{restatable}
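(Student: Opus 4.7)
The plan is to reduce $k$-cycle detection to matrix multiplication via the colour-coding technique of Alon, Yuster, and Zwick~\cite{alon1995color}. Each node samples a uniform colour from $[k]$ and broadcasts it in a single round, so the entire colouring $c : V \to [k]$ is known to every node. A $k$-cycle is \emph{colourful} if its vertices receive pairwise distinct colours; a fixed $k$-cycle is colourful with probability $k!/k^k \geq e^{-k}$. Repeating the random colouring $\Theta(e^k \log n)$ times and running the detection procedure below in each iteration ensures that every $k$-cycle of $G$ is colourful in some iteration with high probability.

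Given a fixed colouring, I would detect a colourful $k$-cycle by a subset dynamic program implemented with matrix multiplications. For each nonempty $S \subseteq [k]$ define the $n \times n$ Boolean matrix $M_S$ by $M_S[u,v] = 1$ iff there is a path in $G$ from $u$ to $v$ whose vertex-colours are pairwise distinct and equal to $S$. The base case $|S|=1$ is trivial; for $|S|\geq 2$ I use
\[
M_S[u,v] \;=\; \Bigl[\,\textstyle\sum_{c \in S} \bigl(M_{S \setminus \{c\}} \cdot A_c\bigr)[u,v] \;>\; 0\,\Bigr],
\]
where $A$ is the adjacency matrix and $A_c[w,v] = A[w,v]\cdot \mathbb{1}[c(v)=c]$. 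A colourful $k$-cycle exists iff some diagonal entry of $A \cdot M_{[k]}$ is nonzero, which is detected by a single round of global aggregation after $M_{[k]}$ is computed. Building all $M_S$ in order of increasing $|S|$ uses $\sum_{s=2}^{k} \binom{k}{s}\, s = O(k \cdot 2^k)$ matrix multiplications, each performed over $\Z$ on matrices whose entries are integers of absolute value at most $n$, hence in $n^{\mme}$ rounds by \theoremref{thm:mm}.

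Multiplying the per-colouring cost $O(k \cdot 2^k) \cdot n^{\mme}$ by the $\Theta(e^k \log n)$ colouring iterations gives the claimed $2^{O(k)}\, n^{\mme} \log n$ rounds. The only technical point requiring care is that after each multiplication one must immediately threshold entries back to $\{0,1\}$, so that intermediate matrices have $O(\log n)$-bit entries and each matrix product stays within the cost promised by \theoremref{thm:mm}; correctness is preserved because we only need ``a colourful path exists'' rather than an exact count. I do not anticipate a serious obstacle: the main content is in recognising that the standard colour-coding DP can be expressed as $O(k \cdot 2^k)$ matrix products of the right size and combining this with the fast distributed matrix multiplication.
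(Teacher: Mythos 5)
Your proposal is correct and follows the paper's route: colour-coding with $\Theta(e^k\log n)$ random colourings, each handled by a subset dynamic program whose transitions are $n\times n$ matrix products evaluated over $\Z$ via \theoremref{thm:mm}, with entries reduced back to Boolean so that all intermediate values fit in $O(\log n)$ bits. The one genuine difference is the recurrence: you extend a colourful path by a single vertex, expressing $M_S$ through the products $M_{S\setminus\{c\}}\cdot A_c$ over the $\card{S}$ choices of the last colour (about $k\cdot 2^k$ products per colouring), whereas the paper splits each colour set $X$ into two balanced halves and computes $C^{(Y)}AC^{(X\setminus Y)}$ over all such splits (at most $3^k$ products). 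Both counts are $2^{O(k)}$, so the bound is unaffected; your variant uses slightly fewer multiplications, while the paper's needs no broadcast of the colouring since each node only uses its own colour to form the singleton matrices. The paper additionally notes that the construction can be derandomised with a $k$-perfect hash family of size $2^{O(k)}\log n$; your randomised version already suffices for the theorem as stated.
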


For the specific case of $k=4$, we provide a novel algorithm that does not use matrix multiplication and detects 4-cycles in only $O(1)$ rounds.

\begin{restatable}{theorem}{thmfourcycles}\label{thm:4-cycles}
The existence of 4-cycles can be detected in $O(1)$ rounds.
\end{restatable}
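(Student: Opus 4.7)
The plan is to combine the Kővári--Sós--Turán (Reiman) bound with a witness-routing scheme. Every $C_4$-free graph on $n$ vertices has at most $\tfrac12(1+\sqrt{4n-3})n = O(n^{3/2})$ edges, so the first step is a density dichotomy: in one round each node broadcasts its own degree, after which every node knows $m = |E|$. If $m$ exceeds the Reiman threshold, a 4-cycle is guaranteed to exist and every node simply outputs YES.

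In the remaining sparse regime $m = O(n^{3/2})$, I would use the reformulation that a 4-cycle exists iff some pair of distinct vertices $a,c$ satisfies $|N(a)\cap N(c)| \ge 2$. For each vertex $v$ and each pair $(a,c)\in\binom{N(v)}{2}$, treat $v$ as a \emph{witness} that $a$ and $c$ share a common neighbour, and route the message $(a,c;v)$ to a canonical destination $h(a,c)$ that depends only on the unordered pair $\{a,c\}$ (via a pairwise-independent hash). The destination node declares a 4-cycle as soon as it sees two messages with the same $\{a,c\}$ and distinct witnesses. Given that every node sends and receives only $O(n)$ messages, this phase runs in $O(1)$ rounds using the routing algorithm of~\citet{lenzen2013optimal}.

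The main obstacle is guaranteeing that $O(n)$ send/receive bound. Even when $m=O(n^{3/2})$, the total number of witness messages $\sum_v \binom{d_v}{2}$ can reach $\Theta(n^{5/2})$, and a skewed degree sequence can create receiver hot-spots. To control this I would split vertices at a threshold $T=\Theta(\sqrt{n})$: the witnesses produced by the low-degree vertices contribute at most $O(nT^2)=O(n^2)$ messages in total, which the pairwise-independent hash $h$ spreads across destinations within the per-node load bound. For the relatively few high-degree vertices I would use a complementary mechanism that exploits their large neighbourhoods directly---for example, a high-degree $v$ can use its $n-1$ outgoing links to collect, in a constant number of rounds, enough information to test locally whether any two of its neighbours share an additional common neighbour. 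Reconciling the two regimes while keeping both simultaneously inside the Lenzen routing budget---and correctly detecting every 4-cycle regardless of which side of the degree threshold its vertices lie on---is the technical heart of the proof.
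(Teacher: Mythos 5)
Your density dichotomy is keyed to the wrong quantity, and that is where the argument breaks. The K\H{o}v\'{a}ri--S\'{o}s--Tur\'{a}n bound only controls the number of \emph{edges} in the sparse case, whereas the routing phase needs control over the number of \emph{2-walks} $\sum_v \binom{\deg(v)}{2}$; as you note yourself, this can still be $\Theta(n^{5/2})$ under the edge bound (e.g.\ $\sqrt{n}$ vertices of degree $n-1$ give $m=O(n^{3/2})$ but $\Theta(n^{5/2})$ two-walks). The paper's dichotomy is tuned precisely to this: each node $x$ computes $|P(x,*,*)|=\sum_{y\in N(x)}\deg(y)$ from the broadcast degrees, and if this is at least $2n-1$ then by pigeonhole some $z\neq x$ has two common neighbours with $x$, so a 4-cycle exists and the algorithm stops; otherwise $\sum_y\deg(y)^2=\sum_x|P(x,*,*)|<2n^2$, i.e.\ the \emph{total} number of 2-walks is $O(n^2)$, which is exactly the budget that lets them be distributed with $O(n)$ words per node. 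Your patch does not close the gap left by the weaker dichotomy: with $m=O(n^{3/2})$ there can be $\Theta(n)$ vertices of degree above any threshold $T=O(\sqrt{n})$, so the high-degree vertices are not ``relatively few,'' and the ``complementary mechanism'' for them is exactly the unsolved part --- a high-degree $v$ cannot learn $|N(u)\cap N(v)|$ for all $u$ in $O(1)$ rounds with only $O(n)$ words of incoming bandwidth without a further load-balancing idea, which is the whole difficulty.

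Two further points. Even in the low-degree regime, a pairwise-independent hash bounds the receiver load only in expectation; Chebyshev does not survive a union bound over $n$ destinations, and the routing scheme of Lenzen requires a worst-case $O(n)$ send/receive guarantee. The paper achieves a deterministic guarantee by a different device: a constant-round tiling lemma packing disjoint squares $A(y)\times B(y)$ of side about $\deg(y)/8$ into $V\times V$ (possible because $\sum_y\deg(y)^2=O(n^2)$), which is then used to spread the 2-walks centred at each $y$ so that every node holds $O(n)$ of them before the final regrouping by endpoint. Your detection criterion itself ($|N(a)\cap N(c)|\ge 2$ for some $a\neq c$) is correct and matches the paper's.
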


%See \sectionref{sec:subgraphs} for details on the subgraph algorithms.

%###
\paragraph{Girth.}
%###
We compute the girth of a graph by leveraging a known trade-off between the girth and the number of edges of the graph~\cite{Matousek02_geometry}. Roughly, we detect short cycles fast, and if they do not exist then the graph must have sufficiently few edges to be learned by all nodes. As far as we are aware, this is the first algorithm to compute the girth in this setting.%; however, \citet{drucker13} leverage a similar approach for various subgraph detection problems in the broadcast congested clique model. Keren: I commented this out since it is unrelated to girth. The use in Drucker et al. of a trade-off between existence of subgraphs and number of edges is mentioned later  in the body of the paper.

\begin{restatable}{theorem}{thmgirth}
For undirected, unweighted graphs, the girth can be computed in $\tilde{O}(n^{\mme})$ rounds.
\end{restatable}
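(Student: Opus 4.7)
The plan is to exploit the classical girth--density tradeoff: a graph on $n$ vertices with girth strictly greater than $2k$ has at most $O(n^{1+1/k})$ edges (by a Moore/Bondy--Simonovits-type bound, e.g.\ from \cite{Matousek02_geometry}). Fix the constant threshold $k_0 = \lceil 1/\mme \rceil$, so that girth $>2k_0$ forces the edge count to be at most $O(n^{1+\mme})$. The algorithm then has two phases: a short-cycle search, and a sparse-case fallback in which every node learns the entire edge set.

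In the first phase, for each $k=3,4,\ldots,2k_0$ in order, invoke the $k$-cycle detection algorithm of \theoremref{thm:k-cycles}, which runs in $2^{O(k)} n^{\mme} \log n$ rounds. Since $k_0=O(1)$ this is a constant number of calls with $k=O(1)$, so the total cost of the phase is $\tilde O(n^{\mme})$ rounds. If some invocation detects a cycle, output the smallest $k$ for which detection succeeded; by definition this is the girth.

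If no $k$-cycle is detected for any $k\leq 2k_0$, the girth exceeds $2k_0$ and hence $m=O(n^{1+1/k_0})=O(n^{1+\mme})$. Using the routing scheme of \citet{lenzen2013optimal}, every node can broadcast its incident edges to all other nodes: the task amounts to delivering $m$ messages to each of the $n$ destinations, so each node acts as source and sink of $O(n^{1+\mme})$ messages, completing in $O(n^{\mme})$ rounds. After this step every node knows the entire graph and computes the girth locally (for instance by $n$ BFS runs).

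The points that need care are (i) verifying that $k_0 \geq 1/\mme$ indeed gives $n^{1+1/k_0}\leq n^{1+\mme}$, so that the sparse-case broadcast fits inside the $n^{\mme}$-round budget; and (ii) handling the randomization of \theoremref{thm:k-cycles}, where color-coding yields a Monte Carlo algorithm whose success probability must be amplified by $O(\log n)$ independent repetitions so that all $O(1)$ invocations succeed simultaneously with high probability. The latter logarithmic factor is exactly what is absorbed into the $\tilde O$ notation of the statement; everything else is assembled from results already established in the paper.
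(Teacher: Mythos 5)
Your proposal is correct and matches the paper's proof in all essentials: both rest on the girth--density trade-off of \lemmaref{lemma:girth} combined with the $k$-cycle detection of \theoremref{thm:k-cycles}, with a fallback that collects the whole (now provably sparse) graph at every node. The only cosmetic differences are the order of the two phases --- the paper first counts edges in $O(1)$ rounds and runs cycle detection only when the graph is dense (hence guaranteed to contain a short cycle), whereas you run cycle detection first and infer sparsity from its failure --- and the paper additionally handles the degenerate case $\mme=0$ by letting the cycle-length threshold grow as $\log\log n$ rather than fixing it via $\lceil 1/\mme\rceil$.
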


\subsection{Applications in Distance Computation}

%###
\paragraph{Shortest paths.}
%###
The \emph{all-pairs shortest paths} problem (APSP) likewise admits algorithms based on matrix multiplication. The basic idea is to compute the $n^\text{th}$ power of the input graph's weight matrix over the min-plus semiring, by iteratively computing squares of the matrix~\cite{furman1970application, Munro197156, fm1971boolean}. %See \sectionref{sec:routing} for details.

\begin{restatable}{corollary}{corrapspsemi}\label{thm:apsp-semiring}
For weighted, directed graphs with integer weights in $\{0 ,\pm 1, \dotsc, \pm M \}$, all-pairs shortest paths can be computed in $O(n^{1/3} \log n \lceil\log M/\log n\rceil)$ communication rounds.
\end{restatable}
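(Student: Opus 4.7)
The plan is to reduce weighted APSP to repeated min-plus matrix squaring, and invoke the semiring bound from \theoremref{thm:mm}. Let $W$ be the $n \times n$ weight matrix of the input graph, with $W_{ii} = 0$, $W_{ij}$ equal to the weight of the edge $(i,j)$ when it exists, and $W_{ij} = +\infty$ otherwise (in practice, we represent $+\infty$ by a sentinel value strictly larger than $nM$, since any finite shortest-path distance in a graph without negative cycles is at most $(n-1)M$ in absolute value). Distribute $W$ in the natural row-wise layout, which is consistent with the assumption in \theoremref{thm:mm}: each node $v$ initially knows row $v$ of $W$, which it can assemble from its knowledge of its own outgoing edges and their weights.

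Next, I would exploit the standard observation (going back to Furman and to Fischer--Meyer) that the $k$th power $W^{(k)}$ of $W$ over the min-plus semiring has $W^{(k)}_{ij}$ equal to the length of a shortest walk from $i$ to $j$ using at most $k$ edges. Hence, as shortest paths use at most $n-1$ edges, $W^{(n)}$ encodes all-pairs shortest-path distances. To reach exponent $n$ it suffices to repeatedly square: after $\lceil \log_2 n \rceil$ min-plus squarings of $W$ one obtains $W^{(2^{\lceil \log_2 n \rceil})}$, which is at least $W^{(n)}$. Each squaring is a single min-plus matrix product and thus can be executed using the semiring algorithm of \theoremref{thm:mm}, after redistributing the previous result so that the row-wise layout is maintained (which is automatic, since the semiring algorithm already produces row $v$ of the product at node $v$).

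The remaining point is the entry size. Entries of $W^{(k)}$ are either $+\infty$ or integers of absolute value at most $(n-1)M$, so each entry is encodable in $b = O(\log n + \log M)$ bits. As noted in the discussion after \theoremref{thm:mm}, increasing the entry size from $\Theta(\log n)$ to $b$ bits inflates the round complexity of a single multiplication by a factor of $O(b/\log n) = O(\lceil \log M/\log n \rceil)$. Combining the $O(n^{1/3})$ semiring bound with this overhead and with the $\lceil \log_2 n \rceil = O(\log n)$ outer iterations yields a total round complexity of
\[
O\bigl(n^{1/3}\bigr)\cdot O(\log n)\cdot O\bigl(\lceil \log M/\log n\rceil\bigr) \;=\; O\bigl(n^{1/3}\log n\,\lceil \log M/\log n\rceil\bigr),
\]
as claimed.

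The only mildly delicate step is accounting for the $+\infty$ sentinel and, more generally, making sure that intermediate sums stay within the $b$-bit budget; this is handled by truncating any tentative sum that exceeds the sentinel back to the sentinel value, which is consistent with min-plus semantics since such a value cannot participate in a finite shortest-path computation. With this bookkeeping in place, the argument is essentially a direct simulation of the classical centralised min-plus repeated-squaring APSP on top of the congested-clique matrix multiplication primitive.
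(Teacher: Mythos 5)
Your proposal is correct and follows essentially the same route as the paper: iterated min-plus squaring of the weight matrix ($\lceil \log n\rceil$ distance products), each product executed by the $O(n^{1/3})$-round semiring algorithm of \theoremref{thm:mm}, with the $\lceil \log M/\log n\rceil$ factor coming from the $O(\log n + \log M)$-bit entry encoding. Your extra bookkeeping for the $+\infty$ sentinel and intermediate-sum truncation is a reasonable elaboration of details the paper leaves implicit.
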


We can leverage fast ring matrix multiplication to improve upon the above result; however, the use of ring matrix multiplication necessitates some trade-offs or extra assumptions. For example, for unweighted and undirected graphs, it is possible to recover the exact shortest paths from powers of the adjacency matrix over the Boolean semiring~\cite{Seidel1995400}.

\begin{restatable}{corollary}{corseidel}
For undirected, unweighted graphs, all-pairs shortest paths can be computed in $\tilde{O}(n^{\mme})$ rounds.
\end{restatable}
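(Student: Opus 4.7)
The plan is to implement Seidel's recursive APSP algorithm in the congested clique, using the fast ring matrix multiplication from Theorem~\ref{thm:mm}. Recall Seidel's scheme: given the adjacency matrix $A$ of an unweighted, undirected graph $G$, first compute $Z = A^2$ over the integers and read off the adjacency matrix $A'$ of the squared graph $G^2$ (an off-diagonal entry is $1$ iff $A_{ij}=1$ or $Z_{ij}>0$). Recursively compute the distance matrix $D'$ of $G^2$. Then recover the distance matrix $D$ of $G$ using the identity $D_{ij}\in\{2D'_{ij}-1,\, 2D'_{ij}\}$, with the smaller value occurring iff some neighbour $k$ of $j$ in $G$ satisfies $D'_{ik}<D'_{ij}$. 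This last condition is checked by computing the integer product $C = D' A$ and comparing $C_{ij}$ with $D'_{ij}\deg(j)$.

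To execute this in the congested clique, I would maintain throughout the invariant that each node $v$ holds row $v$ of the current matrix of interest. Initially this holds for $A$. Each matrix product invoked by the algorithm is then computed by a single call to Theorem~\ref{thm:mm} in $O(n^{\mme})$ rounds, leaving the product distributed by rows in the same way. Given its row of $Z$, each node locally constructs its row of $A'$; given its row of $C$ along with the degree vector (which every node can learn in one additional round by having each node broadcast its degree), each node locally determines its row of $D$. No further communication is needed between the matrix multiplications at a given recursion level.

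The entries of all matrices stay within $O(\log n)$ bits: $A$ and $A'$ are $0/1$, the entries of $Z$ are at most $n$, $D'$ has entries at most $n-1$, and $C$ has entries at most $n^2$. So the bit-complexity assumption of Theorem~\ref{thm:mm} is satisfied at every level and each matrix multiplication costs $O(n^{\mme})$ rounds. The recursion depth is $O(\log n)$, since the diameter of $G^{2^t}$ halves with each squaring until it hits $1$, and is bounded by $n$ at the start; the base case is the complete graph, whose distance matrix is trivial. Thus the total cost is $O(\log n)$ matrix multiplications, i.e.\ $\tilde{O}(n^{\mme})$ rounds.

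There is no real obstacle beyond the bookkeeping above; the main point to verify is precisely that the communication pattern of Seidel's algorithm consists only of integer matrix multiplications with $O(\log n)$-bit entries plus one broadcast of degrees, all of which fall cleanly into the regime where Theorem~\ref{thm:mm} applies. Disconnected graphs are handled by the standard convention that unreachable pairs are detected when squaring stops decreasing the reachability pattern (or, equivalently, by running the algorithm on $G$ augmented with a virtual vertex adjacent to all others and then discarding spurious distances).
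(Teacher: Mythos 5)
Your proposal is correct and follows essentially the same route as the paper: square the graph via a fast integer/Boolean matrix product, recurse on $G^2$, and recover $d_G$ from $d_{G^2}$ using the product $D'A$ together with locally broadcast degrees, for $O(\log n)$ recursion levels. The only cosmetic difference is that the paper phrases the base case as $G = G^2$ (disjoint cliques) rather than diameter halving, but the argument is the same.
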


For small integer weights, we use the well-known idea of embedding a min-plus semiring matrix product into a matrix product over a ring; this gives a multiplicative factor to the running time proportional to the length of the longest path.

\begin{restatable}{corollary}{cordiamapsp}
For directed graphs with positive integer weights and weighted diameter $U$, all-pairs shortest paths can be computed in $\tilde{O}(U n^{\mme})$ rounds.
\end{restatable}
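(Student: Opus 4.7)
The plan is to reduce the APSP computation to $O(\log n)$ ring matrix multiplications by embedding the min-plus semiring into a commutative ring, exploiting that all distances of interest are bounded by the weighted diameter $U$.

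Let $W$ be the $n\times n$ weight matrix of the input graph, with $W_{ii}=0$, $W_{ij}$ the weight of edge $(i,j)$ when present, and $W_{ij}=\infty$ otherwise. The all-pairs shortest-paths matrix equals the $n$th power of $W$ in the min-plus semiring, which I compute by $\lceil\log n\rceil$ repeated min-plus squarings. After each squaring I replace any entry exceeding $U$ by $\infty$. This \emph{truncation} step is safe: if $D^{(k)}$ denotes the matrix of true $k$-edge shortest distances and $\hat D^{(k)}$ its entrywise truncation at $U$, then by monotonicity of the min-plus operation $\hat D^{(k)}\otimes \hat D^{(k)}\ge D^{(2k)}$ entrywise, with equality on entries at most $U$; since every finite entry of the final answer is at most $U$, no information is lost. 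It therefore suffices to show that a min-plus product of two matrices with entries in $\{0,1,\ldots,U\}\cup\{\infty\}$ can be computed in $\tilde O(Un^{\mme})$ rounds.

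For such a bounded-entry min-plus product, I use the classical polynomial embedding: encode $w\in\{0,\ldots,U\}$ as the integer $(n+1)^{w}$ and encode $\infty$ as $0$, obtaining matrices $\hat A,\hat B$ with entries of at most $O(U\log n)$ bits. The ordinary ring product $\hat C=\hat A\hat B$ has entries $\hat C_{ij}=\sum_k (n+1)^{A_{ik}+B_{kj}}$; since at most $n$ terms are summed and each coefficient is therefore at most $n$, reading $\hat C_{ij}$ in base $n{+}1$ recovers the underlying polynomial, and the smallest exponent with nonzero coefficient is precisely $(A\otimes B)_{ij}$, which the node holding row $i$ extracts locally after the product is computed. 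By the bit-length remark following \theoremref{thm:mm}, one ring matrix multiplication on matrices with $O(U\log n)$-bit entries runs in $O(U\cdot n^{\mme})$ rounds in the congested clique.

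Combining the two ingredients gives a total running time of $O(\log n)\cdot O(Un^{\mme})=\tilde O(Un^{\mme})$, as claimed. The only delicate point is the correctness of the truncation between successive squarings, which is handled by the monotonicity argument above; everything else is a straightforward reapplication of \theoremref{thm:mm} to matrices with enlarged entries, together with the standard doubling schedule for computing the $n$th min-plus power.
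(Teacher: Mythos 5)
Your proof is correct and follows essentially the same route as the paper: iterated min-plus squaring with truncation at $U$, combined with the Yuval-style embedding of the bounded-entry distance product into a ring product (the paper uses a formal variable $X$ over $\Z[X]$ where you substitute $X = n+1$, which is the same idea), yielding $\tilde O(Un^{\mme})$ rounds. The only detail the paper adds is handling the case where $U$ is not known to the nodes in advance, via a reachability computation plus doubling the guess for $U$; your argument implicitly assumes $U$ is known, but this is a one-line fix.
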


While this corollary is only relevant for graphs of small weighted diameter, the same idea can be combined with weight rounding~\cite{raghavan85,zwick2002all,nanongkai14} to obtain a fast approximate APSP algorithm without such limitations.

\begin{restatable}{theorem}{thmapxapsp}
For directed graphs with integer weights in $\{0,1,\ldots,2^{n^{o(1)}}\}$, we can compute $(1 + o(1))$-approximate all-pairs shortest paths in $O(n^{\mme + o(1)})$ rounds.
\end{restatable}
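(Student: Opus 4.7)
My plan is to follow the weight-scaling and rounding paradigm of Zwick~\cite{zwick2002all}, adapted to the congested clique as in Nanongkai~\cite{nanongkai14}, using \theoremref{thm:mm} as the inner ring-multiplication primitive. Partition the possible path lengths $[1, nU]$ (with $U = 2^{n^{o(1)}}$) into $L = O(\log(nU)) = n^{o(1)}$ geometric scales $\{[2^\ell, 2^{\ell+1})\}$. For each scale $\ell$ it suffices to recover all distances falling in this range up to a $(1+o(1))$ factor; taking the entrywise minimum over $\ell$ then yields $(1+o(1))$-approximate APSP.

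Fix a scale $\ell$ and a parameter $\epsilon = o(1/\log n)$. Remove edges of weight exceeding $2^{\ell+1}$ (they cannot appear on relevant shortest paths), and round the remaining edge weights up to the nearest multiple of $\epsilon\, 2^\ell/n$, obtaining a graph $G_\ell$ whose (rescaled) edge weights are positive integers of size $O(n/\epsilon)$. Running \corollaryref{cordiamapsp} directly on $G_\ell$ would cost $\tilde O((n/\epsilon)\, n^\mme)$ rounds per scale, which is too slow. Instead I use the min-plus iterated-squaring scheme underlying \corollaryref{cordiamapsp}, but with approximate (rounded) intermediate matrices, so that at every squaring the ring-embedded entries $x^a$ (with $x = n+1$) have exponent $a$ bounded by $n^{o(1)}$. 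Concretely, after every squaring I rescale and round each entry of $D^{(i+1)}$ to at most $R = n^{o(1)}$ distinct levels relative to the current maximum entry $M_i$; this keeps the ring-entry bit length at $b = O(R \log n) = n^{o(1)}$, so by \theoremref{thm:mm} one squaring costs $n^{\mme + o(1)}$ rounds, and the $O(\log n)$ squarings per scale across $L = n^{o(1)}$ scales give the claimed $n^{\mme + o(1)}$ total.

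The main obstacle is the compounded analysis of the rounding errors across the $O(\log n)$ squarings per scale. A single approximate min-plus squaring takes a matrix with multiplicative error $(1+\delta)$ to one with error $(1+\delta)(1+\eta)$, where $\eta$ is the per-round rounding factor, because both $\min$ and $+$ preserve $(1+\eta)$-relative accuracy: if $\tilde a \in [a, (1+\eta)a]$ and $\tilde b \in [b, (1+\eta)b]$ then $\tilde a + \tilde b \in [a+b, (1+\eta)(a+b)]$ and analogously for $\min$. Over $\log n$ squarings the factor is $(1+\eta)^{\log n}$, which is $1+o(1)$ precisely when $\eta = o(1/\log n)$, forcing $R = \omega(\log n)$. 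The weight range $\{0,\ldots,2^{n^{o(1)}}\}$ in the hypothesis is exactly what makes these parameters simultaneously admissible: choosing $\eta = 1/\log^2 n$ and $R = n^{o(1)}$ keeps $R/\eta = n^{o(1)}$, so the ring-entry bit length remains $n^{o(1)}$, while $L = n^{o(1)}$ scales are enough to cover the whole weight range.
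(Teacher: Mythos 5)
Your outer structure---iterated min-plus squaring where each approximate product carries relative error $(1+\eta)$ with $\eta = o(1/\log n)$, so that $(1+\eta)^{O(\log n)} = 1+o(1)$, and the observation that $\log (nU) = n^{o(1)}$ keeps all logarithmic factors inside $n^{o(1)}$---matches the paper's proof, and your composition argument for genuinely \emph{relative} errors under $\min$ and $+$ is correct. The gap is in the inner step: the claim that after each squaring you can ``rescale and round each entry to at most $R = n^{o(1)}$ distinct levels relative to the current maximum entry $M_i$'' while simultaneously (a) preserving $(1+\eta)$ relative accuracy per entry and (b) keeping the exponents in the ring embedding bounded by $n^{o(1)}$. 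These two requirements conflict. If the $R$ levels are uniformly spaced (granularity $M_i/R$), the error you introduce is additive, not relative: summands $D_{uw}+D_{wv}$ can be far smaller than $M_i$ even when the target distance lies in $[2^\ell, 2^{\ell+1})$, and additive errors double at every squaring ($E_{i+1}\le 2E_i + M_{i+1}/R$), so after $\log n$ squarings the accumulated error is $\Theta(nM_i/R)$; forcing this below $\epsilon\, 2^\ell$ requires $R = \Omega(n/\epsilon)$, i.e.\ no compression at all beyond the exact computation you already rejected as too slow. If instead the $R$ levels are geometric, so that each entry individually has $(1+\eta)$ relative error, the rounded values are no longer small integers, and the embedding of \lemmaref{lemma:min-sum-emb} needs the entry \emph{values} as exponents of the formal variable---level indices cannot play that role, since indices of $a$ and $b$ do not add to the index of $a+b$. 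Either way, the claimed $n^{\mme+o(1)}$ cost per squaring with $(1+\eta)$ relative accuracy is not established, and this is the crux of the theorem.

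The missing idea is the one supplied by \lemmaref{lemma:approx-dp} in the paper: to compute a single distance product with relative error $(1+\delta)$, one runs $O(\log_{1+\delta}(nM))$ \emph{separate exact} small-integer products, one per output-magnitude scale $i$---truncate entries above roughly $2(1+\delta)^{i+1}/\delta$, divide by $(1+\delta)^i$ and round up, so that entries become integers bounded by $O(1/\delta)$ and each product costs $O(n^{\mme}/\delta)$ rounds via \lemmaref{lemma:min-sum-emb}---and then takes the entrywise minimum of the rescaled results $\lfloor (1+\delta)^i P^{(i)}_{uv}\rfloor$. The per-scale splitting is what reconciles wide-ranging entries with the small-integer requirement of the ring embedding, which a single global rounding cannot do. Once you have this primitive, your outer iterated squaring applied directly to the weight matrix (the per-$\ell$ graph-level decomposition then becomes unnecessary) is exactly the paper's argument and yields the stated $O(n^{\mme+o(1)})$ bound with $\delta = 1/\log^2 n$.
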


For comparison, the previously best known combinatorial algorithm for APSP on the congested clique achieves a $(2+o(1))$-approximation in $\tilde{O}(n^{1/2})$ rounds~\cite{nanongkai14}.

\subsection{Additional Related Work}\label{sec:related-work}
Computing distances in graphs,
such as the diameter, all-pairs shortest paths (APSP), and single-source shortest paths (SSSP)
are fundamental problems in most computing settings.
The reason for this lies in the abundance of applications of such computations, evident also by the huge amount of research dedicated to it~\cite{Chan10_apsp,HanT12_apsp,
Han08_apsp,Takaoka04_apsp,Zwick06_APSP,Takaoka05_apsp,Chan08_apsp,Fredman76_apsp,
williams2014apsp,
Zwick01_graph_distances_survey,zwick2002all}.

In particular, computing graph distances is vital for many distributed applications and, as such, has been widely studied in the CONGEST model of computation~\cite{peleg00}, where $n$ processors located in $n$ distinct nodes of a graph $G$ communicate over the graph edges using $O(\log{n})$-bit messages. Specifically, many algorithms and lower bounds were given for computing and approximating graph distances
in this setting~\cite{DHKNPPW-11, nanongkai14, LP13:podc,HolzerW12, PelegRT12,FHW-12,LenzenP13_routing_tables,holzer14,KP98,PelegR-00}. Some lower bounds apply even for graphs of small diameter; %~\cite{PelegR-00,DHKNPPW-11,FHW-12,KP98,nanongkai14,holzer14,LP13:podc};
however, these lower bound constructions boil down to graphs that contain \emph{bottleneck} edges limiting the amount of information that can be exchanged between different parts of the graph quickly.

The intuition that the congested clique model would abstract away distances and bottlenecks and bring to light only the congestion challenge has proven inaccurate.
Indeed, a number of tasks have been shown to admit sub-logarithmic or even constant-round solutions, exceeding by far what is possible in the CONGEST model with only low diameter. %~\cite{lotker05,hegeman14,lenzen2013optimal,patt-shamir11}. %; clearly, there is plenty of potential for fast algorithms in the congested clique model.
The pioneering work of \citet{lotker05} shows that a minimum spanning tree (MST) can be computed in $O(\log \log n)$ rounds. \citet{hegeman14} show how to construct a $3$-ruling set, with applications to maximal independent set and an approximation of the MST in certain families of graphs; sorting and routing have been recently addressed by various authors~\cite{lenzen2013optimal,LenzenW11,patt-shamir11}. A connection between the congested clique model and the MapReduce model is discussed by~\citet{HegemanP14}, where algorithms are given for colouring problems. On top of these positive results, Drucker et al.~\cite{drucker13} recently proved that essentially \emph{any} non-trivial unconditional lower bound on the congested clique would imply novel circuit complexity lower bounds.

The same work also points out the connection between fast matrix multiplication algorithms and triangle detection in the congested clique. Their construction yields an $O(n^{\omega -2+\varepsilon})$ round algorithm for matrix multiplication over rings in the congested clique model, giving also the same running bound for triangle detection; if $\omega = 2$, this gives $\mme = 0$, matching our result. However, with the currently best known centralised matrix multiplication algorithm, the running time of the resulting triangle detection algorithm is $O(n^{0.3729})$ rounds, still slower than the combinatorial triangle detection of \citet{tritri}, and if $\omega>2$, the solution presented in this paper is faster.

% Matrix multiplication results

%!TEX root = ./0-main.tex

\section{Matrix Multiplication Algorithms}

In this section, we consider computing the product $P = ST$ of two $n \times n$ matrices $S = (S_{ij})$ and $T = (T_{ij})$ on the congested clique with $n$ nodes. For convenience, we tacitly assume that nodes $v \in V$ are identified with $\{ 1, 2, \dotsc, n\}$, and use nodes $v \in V$ to directly index the matrices. The local input in the matrix multiplication task for each node $v \in V$ is the row $v$ of both $S$ and $T$, and the at the end of the computation each node $v \in V$ will output the row $v$ of $P$. However, we note that the exact distribution of the input and output is not important, as we can re-arrange the entries in constant rounds as long as each node has $O(n)$ entries~\cite{lenzen2013optimal}.

\thmmm*

\theoremref{thm:mm} follows directly by simulating known parallel matrix multiplication algorithms in the congested clique model using a result of \cite{lenzen2013optimal}. This work discusses simulation of the \emph{bulk-synchronous parallel} (BSP) model, which we can use to obtain \theoremref{thm:mm} as a corollary from known BSP matrix multiplication results~\cite{mccoll1995scalable,mccoll1996,tiskin-phd}. However, essentially the same matrix multiplication algorithms have been widely studied in various parallel computation models, and the routing scheme underlying the simulation result of \cite{lenzen2013optimal} allows also simulation of these other models on the congested clique:
\begin{itemize}
    \item[--] The first part of \theoremref{thm:mm} is based on the so-called parallel 3D matrix multiplication algorithm~\cite{AgarwalBGJP95_3d,mccoll1995scalable}, essentially a parallel implementation of the school-book matrix multiplication; alternatively, the same algorithm can be obtained by slightly modifying the triangle counting algorithm of \citet{tritri}.
    \item[--] The second part uses a scheme that allows one to adapt any bilinear matrix multiplication algorithm into a fast parallel matrix multiplication algorithm~\cite{LuoD95_layout_parallel,mccoll1996,tiskin-phd,BallardDHLS12_Strassen_upper}.
\end{itemize}

A more detailed examination in fact shows that the matrix multiplication algorithms are \emph{oblivious}, that is, the communication pattern is pre-defined and only the content of the messages depends on the input. This further allows us to use the static routing scheme of~\citet{tritri}, resulting in simpler algorithms with smaller constant factors in the running time.

To account for all the details, and to provide an easy access for readers not familiar with the parallel computing literature, we present the congested clique versions of these algorithms in full detail in Sections~\ref{sec:smm} and~\ref{sec:fmm}.

\subsection{Semiring matrix multiplication}\label{sec:smm}

\newcommand{\emptyind}{\mathord{\ast}}

\newcommand{\fmbdual}[2]{{#1}{#2}\emptyind}
\newcommand{\fmbsingle}[1]{{#1}\emptyind}

\newcommand{\mbI}[1]{{#1}\emptyind\emptyind}
\newcommand{\mbII}[1]{{#1}\emptyind\emptyind}
\newcommand{\mbIII}[1]{{#1}\emptyind\emptyind}

\newcommand{\pbI}[1]{{#1}\emptyind\emptyind}
\newcommand{\pbII}[1]{\emptyind{#1}\emptyind}
\newcommand{\pbIII}[1]{\emptyind\emptyind{#1}}

%###
\paragraph{Preliminaries.}
%###
For convenience, let us assume that the number of nodes is such that $n^{1/3}$ is an integer. We view each node $v \in V$  as a three-tuple $v_1 v_2 v_3$ where $v_1, v_2, v_3 \in [n^{1/3}]$; for concreteness, we may think that $v_1 v_2 v_3$ is the representation of $v$ as a three-digit number in base-$n^{1/3}$.

For a matrix $S$ and index sets $U,W \subseteq V$, use the notation $S[U,W]$ to refer to the submatrix obtained by taking all rows $u$ with $u \in U$ and columns $w$ with $w \in W$. To easily refer to specific subsets of indices, we use $\emptyind$ as a wild-card in this notation; specifically, we use notation $\pbI{x} = \{ v \colon v_1 = x \}$, $\pbII{x} = \{ v \colon v_2 = x \}$ and $\pbIII{x} = \{ v \colon v_3 = x \}$. Finally, in conjunction with this notation, we use the shorthand $*$ to denote the whole index set $V$ and $v$ to refer to a singleton set $\{ v \}$. See Figure~\ref{fig:semiring-mm}.

\begin{figure}
    \centering
    \includegraphics[page=2]{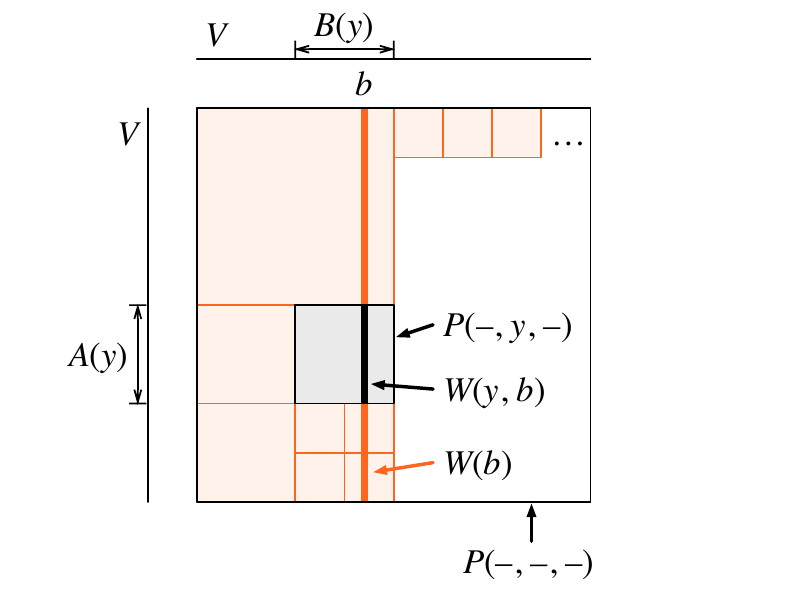}
    \caption{Semiring matrix multiplication: partitioning scheme for matrix entries.}\label{fig:semiring-mm}
\end{figure}

%###
\paragraph{Overview.}
%###
The distributed implementation of the school-book matrix multiplication we present is known as the 3D algorithm. To illustrate why, we note that the $n^3$ element-wise multiplications of the form
\[ P_{uw} = S_{uv}T_{vw}\,,\hspace{10mm} u,v,w \in V \]
can be viewed as points in the cube $V \times V \times V$. To split the element-wise multiplications equally among the nodes, we partition this cube into $n$ subcubes of size $n^{2/3} \times n^{2/3} \times n^{2/3}$. Specifically, each node $v$ is assigned the subcube $\mbI{v_1}\times \mbI{v_2} \times \mbI{v_3}$, corresponding to the multiplication task
\[ S[\mbI{v_1},\mbII{v_2}] T[\mbII{v_2},\mbIII{v_3}]\,.\]

%###
\paragraph{Algorithm description.}
%###
The algorithm computes $n \times n$ intermediate matrices $P^{(w)} = S[\emptyind, \mbII{w}]T[\mbII{w},\emptyind]$ for $w \in [n^{1/3}]$, so that each node $v$ computes the block
\[ P^{(v_2)}[\mbI{v_1},\mbI{v_3}] = S[\mbI{v_1}, \mbII{v_2}]T[\mbII{v_2},\mbIII{v_3}]\,.\]
Specifically, this is done as follows.
\begin{description}
\item[Step 1: Distributing the entries.] Each node $v\in V$ sends, for each node $u \in \pbI{v_1}$, the submatrix $S[v,\mbII{u_2}]$ to node $u$, and for each node $w \in \pbII{v_2}$, the submatrix $T[v,\mbIII{w_3}]$ to $w$.
Each such submatrix has size $n^{2/3}$ and there are $2 n^{2/3}$
recipients, for a total of $2 n^{4/3}$ messages per node.

Dually, each node $v\in V$ receives the submatrix
$S[\mbI{v_1},\mbII{v_2}]$ and the submatrix $T[\mbII{v_2},\mbIII{v_3}]$.
In particular, the submatrix $S[u,\mbII{v_2}]$ is
received from the node $u$ for $u \in \pbI{v_1}$, and the submatrix $T[w,\mbIII{v_3}]$
is received from the node $w \in \pbII{v_2}$. In total, each node receives $2 n^{4/3}$ messages.

\item[Step 2: Multiplication.] Each node $v \in V$ computes the product $S[\mbI{v_1},\mbII{v_2}]$ and $T[\mbII{v_2},\mbIII{v_3}]$ to get the $n^{2/3}\times n^{2/3}$ product matrix $P^{(v_2)}[\mbI{v_1},\mbIII{v_3}]$.

\item[Step 3: Distributing the products.]
Each node $v \in V$ sends submatrix
$P^{(v_2)}[u,\mbIII{v_3}]$ to each node $u \in \pbI{v_1}$.
Each such submatrix has size $n^{2/3}$ and there are $n^{2/3}$
recipients, for a total of $n^{4/3}$ messages per node.

Dually, each node $v\in V$ receives the submatrices
$P^{(w)}[v,\emptyind]$ for each $w\in[n^{1/3}]$. In particular,
the submatrix $P^{(u_2)}[v,\mbIII{u_3}]$ is received from the
node $u \in \pbI{v_1}$. The total number of received messages is $n^{4/3}$ per node.

\item[Step 4: Assembling the product.]
Each node $v\in V$ computes the submatrix
$P[v,\emptyind]=\sum_{w\in[n^{1/3}]} P^{(w)}[v,\emptyind]$
of the product $P=ST$.
\end{description}

%###
\paragraph{Analysis.}
%###
The maximal number of messages sent or received in one of the above steps is $O(n^{4/3})$. Moreover, the communication pattern clearly does not depend on the input matrices, so the algorithm can be implemented in oblivious way on the congested clique using the routing scheme of~\citet[Lemma 1]{tritri}; the running time is $O(n^{1/3})$ rounds.

\subsection{Fast Matrix Multiplication}\label{sec:fmm}

%###
\paragraph{Bilinear matrix multiplication.}
%###
Consider a \emph{bilinear algorithm} multiplying two $d \times d$
matrices using $m < d^3$ scalar multiplications,
such as the Strassen algorithm~\cite{strassen}.
Such an algorithm computes the matrix product $P = ST$
by first computing $m$ linear combinations of entries of both matrices,
\begin{equation}\label{eq:bilin-1}
\hat{S}^{(w)} = \sum_{(i,j) \in [d]^2}\alpha_{ijw} S_{ij}
\hspace{10mm}\text{and}\hspace{10mm}
\hat{T}^{(w)} = \sum_{(i,j) \in [d]^2}\beta_{ijw} T_{ij}
\end{equation}
for each $w \in [m]$,
then computing the products
$\hat{P}^{(w)} = \hat{S}^{(w)} \hat{T}^{(w)}$ for $w \in [m]$,
and finally obtaining $P$ as
\begin{equation}\label{eq:bilin-2}
P_{ij} = \sum_{w \in [m]}\lambda_{ijw} \hat{P}^{(w)}\,,
\hspace{10mm}\text{for $(i,j) \in [d]^2,$}
\end{equation}
where $\alpha_{ijw}$, $\beta_{ijw}$ and $\lambda_{ijw}$
are scalar constants that define the algorithm. In this section we show that any bilinear matrix multiplication algorithm can be efficiently translated to the congested clique model.

\begin{lemma}
Let $R$ be a ring, and assume there exists a family of bilinear matrix multiplication algorithms that can compute product of $n \times n$ matrices with $O(n^{\mmeaux})$ multiplications. Then matrix multiplication over $R$ can be computed in the congested clique in $O\bigl(n^{1 - 2/\mmeaux} (b / \log n) \bigr)$ rounds, where $b$ is the number of bits required for encoding a single element of $R$.
\end{lemma}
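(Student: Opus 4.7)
The plan is to adapt the parallel fast matrix multiplication scheme of \cite{LuoD95_layout_parallel,mccoll1996,tiskin-phd,BallardDHLS12_Strassen_upper} to the congested clique via the routing of \cite{lenzen2013optimal}. Starting from a bilinear algorithm for $d \times d$ matrices with $m = O(d^{\mmeaux})$ multiplications, I would view $S$ and $T$ as $d \times d$ block matrices with blocks of size $(n/d) \times (n/d)$ and fix $d = \Theta(n^{1/\mmeaux})$, so that $m = \Theta(n)$. Applied at the block level, the bilinear algorithm reduces the product $P = ST$ to $m$ intermediate block products $\hat{P}^{(w)} = \hat{S}^{(w)} \hat{T}^{(w)}$, each a multiplication of two $(n/d) \times (n/d)$ matrices, together with the block-wise pre-processing $\hat{S}^{(w)} = \sum_{i,j} \alpha_{ijw} S_{ij}$ (and analogously for $\hat{T}^{(w)}$) and block-wise post-processing $P_{ij} = \sum_{w} \lambda_{ijw} \hat{P}^{(w)}$.

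The key organisational idea is to parallelise the pre- and post-processing \emph{by within-block position} $(a,b) \in [n/d]^{2}$ rather than by product index $w$. For each such position, the $m$ scalars $\{\hat{S}^{(w)}[a,b]\}_{w}$ depend only on the $d^{2}$-element ``slice'' $\{S[i(n/d)+a,\, j(n/d)+b]\}_{i,j}$, so the pre-processing splits into $(n/d)^{2} = O(n^{2-2/\mmeaux})$ independent per-slice problems. I would assign $O(n^{1-2/\mmeaux})$ slices to each of the $n$ nodes and run five oblivious phases: (i) redistribute entries of $S$ and $T$ so that each node receives the $d^{2}$ scalars of every slice it owns, which costs $O(n)$ messages per node and hence $O(1)$ rounds by \cite{lenzen2013optimal}; (ii) locally apply the $\alpha_{ijw}$ and $\beta_{ijw}$ coefficients, producing $O(n^{2-2/\mmeaux})$ scalars of $\hat{S}$ and of $\hat{T}$ per node; (iii) scatter these scalars so that the node responsible for product $w$ receives all of $\hat{S}^{(w)}$ and $\hat{T}^{(w)}$, a pattern with per-node load $O(n^{2-2/\mmeaux})$ realised in $O(n^{1-2/\mmeaux})$ rounds; (iv) each responsible node multiplies $\hat{S}^{(w)} \hat{T}^{(w)}$ locally; (v) invert the layout symmetrically to apply $\lambda_{ijw}$ and deliver each entry of $P$ to its row holder.

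The main obstacle is the routing in phase (iii), and its symmetric counterpart in phase (v): the $O(n^{2-2/\mmeaux})$ per-node bound must hold, which rules out the naive ``by-product'' partition that makes each node immediately responsible for an entire $\hat{P}^{(w)}$. In that partition, each source node would have to ship $\Theta(m \cdot (n/d)) = \Theta(n^{2-1/\mmeaux})$ partial sums (one per destination product per column-block of its row), giving only $O(n^{1-1/\mmeaux})$ rounds. The by-position split aggregates the $d^{2}$ contributions of each slice \emph{inside} a single node during phase (ii), so that each slice produces only $m$ outgoing scalars in phase (iii) -- saving precisely the $n^{1/\mmeaux}$ factor needed and matching the communication lower bounds of \cite{IronyTT04_3d_lower,BallardDHS12_strassen_lower}. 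Since the pattern depends only on the bilinear algorithm and $d$, it is oblivious, and the static routing of \cite{tritri} suffices in place of Lenzen's; finally, each of the $O(n^{1-2/\mmeaux})$ rounds is split into $\lceil b/\log n \rceil$ sub-rounds to carry $b$-bit ring elements over $O(\log n)$-bit channels, yielding the stated bound.
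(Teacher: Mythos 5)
Your proposal is correct and follows essentially the same route as the paper: block the matrices into a $d\times d$ grid with $d=\Theta(n^{1/\mmeaux})$ so that $m=\Theta(n)$, distribute the pre- and post-processing by within-block position (the paper realises your ``slices'' as the $n^{1/2}/d\times n^{1/2}/d$ sub-blocks owned by the node with secondary label $x_1x_2$, each holding exactly the $O(n^{1-2/\mmeaux})$ positions you assign per node), route the $O(n^{2-2/\mmeaux})$-per-node loads obliviously, and multiply one $\hat S^{(w)}\hat T^{(w)}$ pair per node. Your analysis of the per-phase loads, the obliviousness, and the $b/\log n$ overhead all match the paper's argument.
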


In particular for integers, rationals and their extensions, it is known that for any constant $\varepsilon > 0$ there is a bilinear algorithm for matrix multiplication that uses $O(n^{\omega+\varepsilon})$ multiplications~\cite{burgisser1997algebraic}; thus, the second part of \theoremref{thm:mm} follows from the above lemma.

%###
\paragraph{Preliminaries.}
%###
Let us fix a bilinear algorithm that computes the product of $d \times d$ matrices using $m(d) = O(d^\mmeaux)$ scalar multiplications for any $d$, where $2 \le \mmeaux \le 3$. To multiply two $n\times n$ matrices on a congested clique of $n$ nodes, fix $d$ so that $m(d) = n$, assuming for convenience that $n$ is such that this is possible. Note that we have $d = O(n^{1/\mmeaux})$.

Similarly with the semiring matrix multiplication, we view each node $v$ as three-tuple $v_1v_2v_3$, where we assume that $v_1 \in [d]$, $v_2 \in [n^{1/2}]$ and $v_3 \in [n^{1/2}/d]$; that is, $v_1 v_2 v_3$ can be viewed as a mixed-radix representation of the integer $v$. This induces a partitioning of the input matrices $S$ and $T$ into a two-level grid of submatrices; using the same wild-card notation as before, $S$ is partitioned into a $d \times d$ grid of $n/d \times n/d$ submatrices $S[\fmbdual{i}{\emptyind},\fmbdual{j}{\emptyind}]$ for $(i,j) \in [d]^2$, and each of these submatrices is further partitioned into an $n^{1/2} \times n^{1/2}$ grid of $n^{1/2}/d \times n^{1/2}/d$ submatrices $S[\fmbdual{i}{x},\fmbdual{j}{y}]$ for $x, y \in [n^{1/2}]$. The other input matrix $T$ is partitioned similarly; see Figure~\ref{fig:fast-mm}.

\begin{figure}[t]
    \centering
    \includegraphics[page=3]{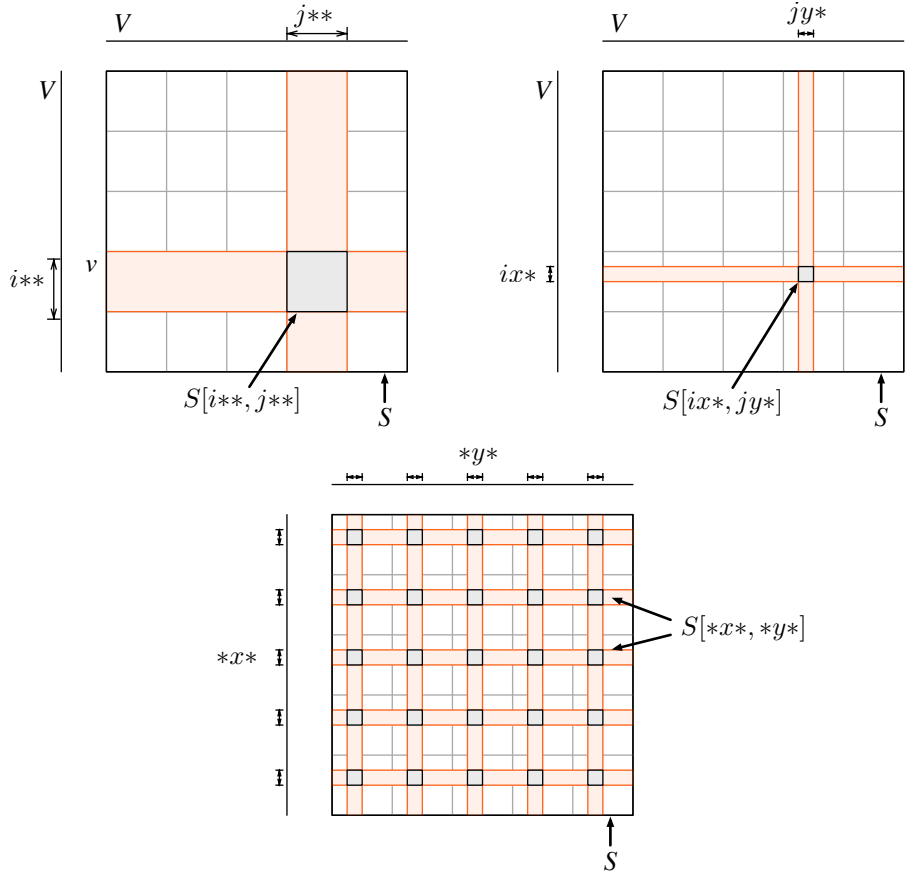}
    \caption{Fast matrix multiplication: partitioning schemes for matrix entries.}\label{fig:fast-mm}
\end{figure}

Finally, we give each node $v \in V$ a unique secondary label $\ell(v) = x_1x_2 \in [n^{1/2}]^2$; again, for concreteness we assume that $x_1x_2$ is the representation of $v$ in base-$n^{1/2}$ system, so this label can be computed from $v$ directly.

%###
\paragraph{Overview.}
%###
The basic idea of the fast distributed matrix multiplication is that we view the matrices $S$ and $T$ as $d\times d$ matrices $S'$ and $T'$ over the ring of $n/d \times n/d$ matrices, where
\[ S'_{ij} = S[\fmbdual{i}{\emptyind},\fmbdual{j}{\emptyind}]\,, \hspace{10mm}T'_{ij}=T[\fmbdual{i}{\emptyind},\fmbdual{j}{\emptyind}]\,, \hspace{10mm}i,j \in [d]\,,\]
which allows us to use \eqref{eq:bilin-1} and \eqref{eq:bilin-2} to compute the matrix product using the fixed bilinear algorithm; specifically, this reduces the $n \times n$ matrix product into $n$ instances of $n^{1-2/\mmeaux} \times n^{1-2/\mmeaux}$ matrix products, each of which is given to a different node. For the linear combination steps, we use a partitioning scheme where each node $v$ with secondary label $\ell(v) = x_1 x_2$ is responsible for $n^{1/2}/d \times n^{1/2}/d$ of the matrices involved in the computation.

%###
\paragraph{Algorithm description.}
%###
The algorithm computes the matrix product $P = ST$ as follows.
\begin{description}
    \item[Step 1: Distributing the entries.] Each node $v$ sends, for $x_2 \in [n^{1/2}]$, the submatrices $S[v,\fmbdual{\emptyind}{x_2}]$ and $T[v,\fmbdual{\emptyind}{x_2}]$ to the node $u$ with label $\ell(u) = v_2x_2$. Each submatrix has $n^{1/2}$ entries and there are $n^{1/2}$ recipients each receiving two submatrices, for a total of $2n$ messages per node.

    Dually, each node $u$ with label $\ell(u) = x_1x_2$ receives the submatrices $S[v,\fmbdual{\emptyind}{x_2}]$ and $T[v,\fmbdual{\emptyind}{x_2}]$ from the nodes $v=v_1 v_2 v_3$ with $v_2 = x_1$. In particular, node $u$ now has the submatrices $S[\fmbdual{\emptyind}{x_1},\fmbdual{\emptyind}{x_2}]$ and $T[\fmbdual{\emptyind}{x_1},\fmbdual{\emptyind}{x_2}]$. The total number of received messages is $2n$ per node.
    \item[Step 2: Linear combination of entries.] Each node $v$ with label $\ell(v) = x_1x_2$ computes for $w \in V$ the linear combinations
    \begin{align*}
        \hat{S}^{(w)}[\fmbsingle{x_1},\fmbsingle{x_2}] = & \sum_{(i,j) \in [d]^2} \alpha_{ijw} S[\fmbdual{i}{x_1},\fmbdual{j}{x_2}]\,,\hspace{10mm}\text{and}\\
        \hat{T}^{(w)}[\fmbsingle{x_1},\fmbsingle{x_2}] = & \sum_{(i,j) \in [d]^2} \beta_{ijw} T[\fmbdual{i}{x_1},\fmbdual{j}{x_2}]\,.
    \end{align*}
    The computation is performed entirely locally.
    \item[Step 3: Distributing the linear combinations.] Each node $v$ with label $\ell(v) = x_1x_2$ sends, for $w \in W$, the submatrices $\hat{S}^{(w)}[\fmbsingle{x_1},\fmbsingle{x_2}]$ and $\hat{T}^{(w)}[\fmbsingle{x_1},\fmbsingle{x_2}]$ to node $w$. Each submatrix has $(n^{1/2}/d)^2 = O(n^{1-2/\mmeaux})$ entries and there are $n$ recipients each receiving two submatrices, for a total of $O(n^{2-2/\mmeaux})$ messages per node.

    Dually, each node $w \in V$ receives the submatrices $\hat{S}^{(w)}[\fmbsingle{x_1},\fmbsingle{x_2}]$ and $\hat{T}^{(w)}[\fmbsingle{x_1},\fmbsingle{x_2}]$ from node $v \in V$ with label $\ell(v) = x_1x_2$. Node $u$ now has the matrices $\hat{S}^{(w)}$ and $\hat{T}^{(w)}$. The total number of received messages is $O(n^{2-2/\mmeaux})$ per node.
    \item[Step 4: Multiplication.] Node $w \in V$ computes the product $\hat{P}^{(w)} = \hat{S}^{(w)}\hat{T}^{(w)}$. The computation is performed entirely locally.
    \item[Step 5: Distributing the products.] Each node $w$ sends, for $x_1, x_2 \in [n^{1/2}]$, the submatrix $\hat{P}^{(w)}[\fmbsingle{x_1},\fmbsingle{x_2}]$ to node $v \in V$ with label $x_1x_2$. Each submatrix has $(n^{1/2}/d)^2 = O(n^{1-2/\mmeaux})$ entries and there are $n$ recipients, for a total of $O(n^{2-2/\mmeaux})$ messages sent by each node.

    Dually, each node $v \in V$ with label $\ell(v) = x_1x_2$ receives the submatrix $\hat{P}^{(w)}[\fmbsingle{x_1},\fmbsingle{x_2}]$ from each node $w \in V$. The total number of received messages is $O(n^{2-2/\mmeaux})$ per node.
    \item[Step 6: Linear combination of products.] Each node $v \in V$ with label $\ell(v) = x_1x_2$ computes for $i,j \in [d]$ the linear combination
        \[ P[\fmbdual{i}{x_1},\fmbdual{j}{x_2}] = \sum_{w \in V} \lambda_{ijw} \hat{P}^{(w)}[\fmbsingle{x_1},\fmbsingle{x_2}]\,.\]
    Node $v \in V$ now has the submatrix $P[\fmbdual{\emptyind}{x_1},\fmbdual{\emptyind}{x_2}]$. The computation is performed entirely locally.
    \item[Step 7: Assembling the product.] Each node $v \in V$ with label $\ell(v) = x_1x_2$ sends, for each node $u \in V$ with $u_2 = x_1$, the submatrix $P[u,\fmbdual{\emptyind}{x_2}]$ to the node $u$. Each submatrix has $n^{1/2}$ entries and there are $n^{1/2}$ recipients, for a total of $n$ messages sent by each node.

    Dually, each node $u \in V$ receives the submatrix $P[u,\fmbsingle{\emptyind x_2}]$ from the node $v$ with label $\ell(v) = u_2x_2$. Node $u$ now has the row $P[u,\emptyind]$ of the product matrix $P$. The total number of received messages is $n$ per node.
\end{description}

%###
\paragraph{Analysis.}
%###
The maximal number of messages sent or received by a node in the above steps is $O(n^{2-2/\mmeaux})$. Moreover, the communication pattern clearly does not depend on the input matrices, so the algorithm can be implemented in an oblivious way on the congested clique using the routing scheme of~\citet[Lemma 1]{tritri}; the running time is $O(n^{1 - 2/\mmeaux})$ rounds.

% Upper bounds for graph problems

\section{Upper Bounds}

%!TEX root = ./0-main.tex

%###
\subsection{Subgraph Detection and Counting}
%###
\label{sec:subgraphs}

The subgraph detection and counting algorithms we present are mainly based on applying the fast matrix multiplication to the \emph{adjacency matrix $A$} of a graph $G = (V,E)$, defined as
\[
A_{uv} =
    \begin{cases}
    1 & \text{if } (u,v) \in E\,,\\
    0 & \text{if } (u,v) \notin E\,,
    \end{cases}
\]
where we assume that for undirected graphs edges $\{ u, v \} \in E$ are oriented both ways.

%###
\paragraph{Counting triangles and 4-cycles.}
%###
For counting triangles, that is, $3$-cycles, we use a technique first observed by \citet{itai1978finding}. That is, in an undirected graph with adjacency matrix $A$, the number of triangles is known to be
$ \frac{1}{6} \trace(A^3)$, where the \emph{trace $\trace(S)$} of a matrix $S$ is the sum of its diagonal entries $S_{uu}$. Similarly, for directed graphs, the number of triangles is
$ \frac{1}{3} \trace(A^3)$.

\citet{alon1997finding} generalise the above formula to counting undirected and directed $k$-cycles for small $k$. For example, the number of $4$-cycles in an undirected graph is given by
\[ \frac{1}{8}\Bigl[ \trace(A^4) - \sum_{v \in V} \Bigr( 2 (\deg(v))^2 - \deg(v) \Bigr) \Bigr]\,.\]
Likewise, if $G$ is a loopless directed graph and we denote for $v \in V$ by $\delta(v)$
the number of nodes $u \in V$ such that $\{ (u,v), (v,u) \} \subseteq E$,
then the number of directed $4$-cycles in $G$ is
\[ \frac{1}{4}\Bigl[ \trace(A^4) - \sum_{v \in V} \Bigr( 2 (\delta(v))^2 - \delta(v) \Bigr) \Bigr]\,.\]
Combining these observations with \theoremref{thm:mm}, we immediately obtain \corollaryref{cor:triangles}:

\thmtriangles*

We note that similar trace formulas exists for counting $k$-cycles for $k \in \{ 5, 6, 7 \}$,
requiring only computation of small powers of $A$ and local information.
We omit the detailed discussion of these in the context of the congested clique;
see \citet{alon1997finding} for details.

%###
\paragraph{Detecting $k$-cycles.}
%###
For detection of $k$-cycles %when $k$ is an arbitrary constant, 
we leverage the \emph{colour-coding} techniques of \citet{alon1995color} in addition to the matrix multiplication.
Again, the distributed algorithm is a straightforward adaptation of a centralised one.

Fix a constant $k\in \mathbb{N}$.
Let $c \colon V \to [k]$ be a labelling (or colouring) of the nodes by $k$ colours,
such that node $v$ knows its colour $c(v)$;
it should be stressed here that the colouring need not to be a proper colouring in the sense of the graph colouring problem.
As a first step, we consider the problem of finding a \emph{colourful $k$-cycle}, that is, a $k$-cycle such that each colour occurs exactly once on the cycle. We present the details assuming that the graph $G$ is directed, but the technique works in an identical way for undirected graphs.

\begin{lemma}\label{lemma:cccc}
Given a graph $G = (V,E)$ and a colouring $c \colon V \to [k]$, a colourful $k$-cycle can be detected in $O\bigl( 3^k n^{\mme} \bigr)$ rounds.
\end{lemma}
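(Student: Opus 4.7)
The plan is to carry out the standard colour-coding dynamic programme for colourful paths inside the congested clique, executing each update step as one invocation of \theoremref{thm:mm}. For every nonempty $S \subseteq [k]$, let $B_S$ be the $n \times n$ $0/1$ matrix with $B_S[u,v] = 1$ iff there exists a walk $u = w_0, w_1, \dotsc, w_{|S|-1} = v$ in $G$ whose sequence of vertex colours is a permutation of $S$ (so in particular all colours along the walk are distinct). Writing $D_c$ for the diagonal matrix with $D_c[v,v] = \mathbf{1}[c(v) = c]$ and $A_c = A D_c$ for the adjacency matrix restricted to edges landing in colour $c$, the recursion is
\[
B_{\{c\}} = D_c, \qquad B_S \;=\; \bigvee_{c \in S} B_{S \setminus \{c\}}\, A_c \quad (|S|\ge 2),
\]
where $\vee$ denotes entrywise Boolean OR. A colourful $k$-cycle in $G$ exists iff some diagonal entry of $B_{[k]}\, A$ is nonzero, so after computing $B_{[k]}$ we perform one further matrix product and then let every node broadcast its own diagonal entry in a single additional round to decide the output globally.

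Each $B_S$ is computed over $\mathbb{Z}$ using \theoremref{thm:mm}, with each node thresholding its row back to $\{0,1\}$ after every product; this keeps all matrix entries encodable in $O(\log n)$ bits, so the ring matrix multiplication bound of \theoremref{thm:mm} applies unchanged. The matrices $D_c$ and $A_c$ require no communication to set up: every node $v$ knows $c(v)$ and its row of $A$, and hence its row of every $D_c$ and $A_c$. The subsets $S$ are iterated in order of increasing cardinality, so by the time we compute $B_S$ each node already holds its row of every $B_{S\setminus\{c\}}$ produced at the previous level and of every $A_c$, which is exactly what \theoremref{thm:mm} needs on input.

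For the round bookkeeping, the total number of matrix products performed across all nonempty $S \subseteq [k]$ is
\[
\sum_{s=1}^{k} \binom{k}{s} s \;=\; k\, 2^{k-1},
\]
and an easy induction gives $k\, 2^{k-1} \le 3^k$ for every $k \ge 1$; adding the one final multiplication $B_{[k]}\, A$ and the single broadcast round changes nothing. By \theoremref{thm:mm}, each product costs $O(n^{\mme})$ rounds, giving the claimed $O(3^k n^{\mme})$ total.

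The only delicate point is making sure the Boolean semantics is faithfully reproduced by the ring computation: if we did not threshold, entries of $B_S$ could grow to represent path counts of size up to $n^{|S|-1}$, which would require an extra $b/\log n = O(k)$ factor per multiplication as noted after \theoremref{thm:mm}. Thresholding locally after every product avoids this altogether and is what I would recommend; it costs no communication and leaves the final bound at $O(3^k n^{\mme})$ rounds.
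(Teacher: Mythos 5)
Your proof is correct, but it takes a genuinely different route from the paper's. You run the classical ``extend the path by one vertex'' colour-coding dynamic programme, computing $B_S$ from the $B_{S\setminus\{c\}}$ via $|S|$ products with the colour-restricted adjacency matrices $A_c$, for a total of $\sum_s \binom{k}{s}s = k2^{k-1}$ multiplications. The paper instead uses a balanced-splitting recurrence, $C^{(X)} = \bigvee_{Y \subseteq X,\, |Y| = \lceil |X|/2\rceil} C^{(Y)} A C^{(X\setminus Y)}$, and bounds the number of multiplications by the number of nested pairs $Y \subseteq X \subseteq [k]$, which is $3^k$. Your count $k2^{k-1}$ is in fact asymptotically smaller, and both are within the stated $O(3^k n^{\mme})$; the paper's halving recurrence has depth $O(\log k)$ rather than $k$, but since the cost is counted per multiplication this buys nothing here. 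Your explicit thresholding to $\{0,1\}$ after each product is exactly the right way to make the ring simulation of the Boolean semiring respect the $O(\log n)$-bit entry bound, and the paper leaves this implicit. One small inaccuracy: node $v$ cannot form its row of $A_c = AD_c$ from $c(v)$ and its row of $A$ alone, since $(A_c)_{vw} = A_{vw}\mathbf{1}[c(w)=c]$ requires the colours of $v$'s neighbours; this is fixed by a single preliminary round in which every node broadcasts its colour (or by keeping $D_c$ as a separate right factor, as the paper effectively does, since row $v$ of a diagonal colour matrix depends only on $c(v)$). This does not affect the claimed bound.
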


\begin{proof}
For each subset of colours $X \subseteq [k]$, let $C^{(X)}$ be a Boolean matrix such that $C^{(X)}_{uv} = 1$ if there is a path of length $\card{X}-1$ from $u$ to $v$ containing exactly one node of each colour from $X$,
and $C^{(X)}_{uv} = 0$ otherwise.
For a singleton set $\{ i \} \subseteq [k]$, the matrix $C^{(\{ i \}) }$
contains $1$ only on the main diagonal, and only for nodes $v$ with $c(v) = i$;
hence, node $v$ can locally compute the row $v$ of the matrix from its colour.
For a non-singleton colour set $X$, we have that
\begin{equation}\label{eq:colourful-cycle-recurrence}
    C^{(X)} = \bigvee_{\substack{ Y \subseteq X \\ \card{Y} = \lceil \card{X}/2 \rceil}} C^{(Y)} A C^{(X \setminus Y)}\,,
\end{equation}
where the products are computed over the Boolean semiring and $\lor$ denotes element-wise logical or.
Thus, we can compute $C^{(X)}$ for all $X \subseteq [k]$ by applying \eqref{eq:colourful-cycle-recurrence} recursively;
there is a colourful $k$-cycle in $G$ if and only if
there is a pair of nodes $u,v\in V$ such that $C^{([k])}_{uv} = 1$ and $(v,u) \in E$.

To leverage fast matrix multiplication, we simply perform the operations stated in \eqref{eq:colourful-cycle-recurrence} over the ring $\mathbb{Z}$ and observe that an entry of the resulting matrix is non-zero if and only if the corresponding entry of $C^{(X)}$ is non-zero.
The application of (\ref{eq:colourful-cycle-recurrence}) needs two matrix multiplications for each pair $(Y,X)$
with $Y \subseteq [k]$ and $\card{Y} = \lceil \card{X}/2 \rceil = \lceil k/2\rceil$.
The number of such pairs is bounded by $3^k$; to see this, note that the set $\{ (Y,X) \colon Y \subseteq X \subseteq [k] \}$ can be identified with the set $\{ 0, 1, 2 \}^k$ of trinary strings of length $k$ via the bijection $w_1 w_2 \dotsc w_k \mapsto (\{ i \colon w_i = 0 \}, \{ i \colon w_i \le 1 \})$, and the set $\{ 0, 1, 2 \}^k$ has size exactly $3^k$.
Thus, the total number of matrix multiplications used is at most $O(3^k)$.
\end{proof}

We can now use \lemmaref{lemma:cccc} to prove \theoremref{thm:k-cycles}; while we cannot directly construct a suitable colouring from scratch for an uncoloured graph, we can try an exponential in $k$ number of colourings to find a suitable one.

\thmkcycles*

\begin{proof}
To apply \lemmaref{lemma:cccc}, we first have to obtain a colouring $c \colon V \to [k]$
that assigns each colour once to at least one $k$-cycle in $G$,
assuming that one exists. If we pick a colour $c(v) \in [k]$ for each node uniformly at random,
then for any $k$-cycle $C$ in $G$,
the probability that $C$ is colourful in the colouring $c$ is
$k! / k^k < e^{-k}$.
%\[ \operatorname{Pr}\bigl[\text{$C$ is colourful in $c$}\bigr] = \frac{k!}{k^k} > e^{-k}\,.\]
Thus, by picking $e^k\log n$ uniformly random colourings and applying \lemmaref{lemma:cccc}
to each of them, we find a $k$-cycle with high probability if one exists.

% [[[Ami: If there are many cycles, the probability increases.
% Can this be compared to the randomized result in Tri, Tri again,
% about finding a triangle fast if there are plenty?]]]
%
% Janne: Possibly, but I don't think we should open this can of worms right now. There are also techniques that give better exponential dependence on $k$, but they are obviously more complicated...

This algorithm can also be derandomised using standard techniques. A \emph{$k$-perfect family of hash functions} $\mathcal{H}$ is a collection of functions $h \colon V \to [k]$ such that for each $U \subseteq V$ with $\card{U} = k$, there is at least one $h \in \mathcal{H}$ such that $h$ assigns a distinct colour to each node in $U$. There are known constructions that give such families $\mathcal{H}$ with $\card{\mathcal{H}} = 2^{O(k)} \log n$ and these can be efficiently constructed~\cite{alon1995color}; thus, it suffices to take such an $\mathcal{H}$ and apply \lemmaref{lemma:cccc} for each colouring $h \in \mathcal{H}$.
\end{proof}

%###
\paragraph{Detecting 4-cycles.}
%###

We have seen how to \emph{count} $4$-cycles with the help of matrix multiplication in $O(n^\mme)$ rounds. We now show how to \emph{detect} $4$-cycles in $O(1)$ rounds. The algorithm does not make direct use of matrix multiplication algorithms. However, the key part of the algorithm can be interpreted as an efficient routine for sparse matrix multiplication, under a specific definition of sparseness.

Let
\[
    P(X,Y,Z) = \{ (x,y,z) : x \in X, y \in Y, z \in Z, \{x,y\} \in E, \{y,z\} \in E \}
\]
consist of all distinct $2$-walks (paths of length $2$) from $X$ through $Y$ to $Z$. We will use again the shorthand notation $v$ for $\{v\}$ and $*$ for $V$; for example, $P(x,*,*)$ consists of all walks of length $2$ from node $x$. There exists a $4$-cycle if and only if $|P(x,*,z)| \ge 2$ for some $x \ne z$.

On a high level, the algorithm proceeds as follows.
\begin{enumerate}
    \item Each node $x$ computes $|P(x,*,*)|$. If $|P(x,*,*)| \ge 2n-1$, then there has to be some $z \ne x$ such that $|P(x,*,z)| \ge 2$, which implies that there exists a $4$-cycle, and the algorithm stops.
    \item Otherwise, each node $x$ finds $P(x,*,*)$ and checks if there exists some $z \ne x$ such that $|P(x,*,z)| \ge 2$.
\end{enumerate}
The first phase is easy to implement in $O(1)$ rounds. The key idea is that if the algorithm does not stop in the first phase, then the total volume of $P(*,*,*)$ is sufficiently small so that we can afford to gather $P(x,*,*)$ for each node $x$ in $O(1)$ rounds.

We now present the algorithm in more detail. We write $N(x)$ for the neighbours of node $x$. To implement the first phase, it is sufficient for each node $y$ to broadcast $\deg(y) = |N(y)|$ to all other nodes; we have
\[
    |P(x,*,*)| = \sum_{y \in N(x)} \deg(y).
\]

Now let us explain the second phase. Each node $y$ is already aware of $N(y)$ and hence it can construct $P(*,y,*) = N(y) \times \{y\} \times N(y)$. Our goal is to distribute the set of all $2$-walks
\[
    \bigcup_y P(*,y,*) = P(*,*,*) = \bigcup_x P(x,*,*)
\]
so that each node $x$ will know $P(x,*,*)$.

In the second phase, we have
\[
    \sum_y \deg(y)^2
    = \sum_y |P(*,y,*)|
    = \sum_x |P(x,*,*)|
    < 2n^2.
\]
Using this bound, we obtain the following lemma.
\begin{lemma}\label{lem:4cAB}
    It is possible to find sets $A(y)$ and $B(y)$ for each $y \in V$ such that the following holds:
    \begin{itemize}
        \item $A(y) \subseteq V$, $B(y) \subseteq V$, and $|A(y)| = |B(y)| \ge \deg(y)/8$,
        \item the tiles $A(y) \times B(y)$ are disjoint subsets of the square $V \times V$.
    \end{itemize}
    Moreover, this can be done in $O(1)$ rounds in the congested clique.
\end{lemma}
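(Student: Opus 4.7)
The plan is to recast the problem as packing axis-aligned squares into the grid $V\times V$, making the packing deterministic so that it can be carried out with no communication after each node has seen the full degree sequence. A single round suffices for every node to broadcast its degree and learn every other node's degree: a degree fits in $O(\log n)$ bits, and in the congested clique each node has a direct link to every other node.

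Identifying $V$ with $\{1,\ldots,n\}$, I would set $s(y)=\lceil \deg(y)/8\rceil$ and aim to pack pairwise disjoint $s(y)\times s(y)$ axis-aligned squares inside the $n\times n$ grid. Taking the horizontal and vertical sides of such a square as $A(y)$ and $B(y)$ yields intervals of node IDs with the required cardinality, and disjointness of the squares inside $V\times V$ is exactly disjointness of the tiles $A(y)\times B(y)$. To produce the squares, each node sorts $V$ in decreasing order of $s(y)$, with ties broken by ID so that every node's computation agrees, and then runs Next-Fit-Decreasing-Height shelf packing into the $n$-wide strip. Each node reads off its own two intervals from this local deterministic computation; no further communication is needed.

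The only substantive step is to verify that the packing actually fits inside the grid, i.e., that the total shelf-packing height stays below $n$. Here is where the first-phase bound $\sum_y \deg(y)^2 < 2n^2$ enters: it controls both the total area $A=\sum_y s(y)^2$ (using also Cauchy--Schwarz to bound $\sum_y \deg(y)$, a short calculation shows $A$ is at most a small constant fraction of $n^2$) and the maximum square side $s_{\max}\le n/8$. Plugging into the standard NFDH height bound of the form $2A/W + s_{\max}$ with $W=n$ then gives total height strictly less than $n$, so the packing fits. Without the first-phase inequality the lemma would be false, so that is the one step where I would be most careful in the write-up; everything else is routine bookkeeping.
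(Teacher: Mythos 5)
Your proposal is correct, but it takes a genuinely different packing route from the paper. The paper sets $f(y)$ to $\deg(y)/4$ rounded \emph{down to the nearest power of two} and packs the $f(y)\times f(y)$ tiles by a recursive quadtree subdivision of a $k\times k$ dyadic square ($k$ = $n$ rounded down to a power of two): since $\sum_y f(y)^2 \le \sum_y \deg(y)^2/16 < n^2/8 < k^2$ and every side length divides $k$, the greedy level-by-level filling trivially succeeds, and the two roundings account exactly for the constant $8$ in $|A(y)|\ge \deg(y)/8$. You instead keep sides $s(y)=\lceil \deg(y)/8\rceil$ and invoke the NFDH area bound $2A/W+s_{\max}$; this works --- with $\sum_y \deg(y)^2 < 2n^2$ you get $A \le n^2/32 + O(n^{3/2})$ and $s_{\max}\le n/8+1$, so the height is about $3n/16+O(\sqrt n)<n$ --- but it imports a nontrivial external lemma (the NFDH analysis) and a Cauchy--Schwarz step to control the linear term, and it only closes for $n$ above a small constant threshold, whereas the paper's dyadic argument is self-contained and exact. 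Both versions share the essential implementation idea: one broadcast round for the degrees, after which every node runs the same deterministic packing locally, so the comparison is purely about how the combinatorial existence claim is certified. Two minor points to tighten in a write-up: handle $\deg(y)=0$ by taking $A(y)=B(y)=\emptyset$ (your $s(y)$ would be $0$, which is fine but should be said), and state explicitly that the bound $\sum_y\deg(y)^2<2n^2$ is a standing hypothesis inherited from the first phase of the $4$-cycle algorithm, since the lemma is false without it.
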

\begin{proof}
    Let $f(y)$ be $\deg(y)/4$ rounded down to the nearest power of $2$, and let $k$ be $n$ rounded down to the nearest power of $2$. We have $\sum_y f(y)^2 \le \sum \deg(y)^2/16 < n^2/8 < k^2$. Now it is easy to place the tiles of dimensions $f(y) \times f(y)$ inside a square of dimensions $k \times k$ without any overlap with the following iterative procedure:
    \begin{itemize}
        \item Before step $i = 1,2,\dotsc$, we have partitioned the square in sub-squares of dimensions $k/2^{i-1} \times k/2^{i-1}$, and each sub-square is either completely full or completely empty.
        \item During step $i$, we divide each sub-square in $4$ parts, and fill empty squares with tiles of dimensions $f(y) = k/2^i$.
        \item After step $i$, we have partitioned the square in sub-squares of dimensions $k/2^i \times k/2^i$, and each sub-square is either completely full or completely empty.
    \end{itemize}
    This way we have allocated disjoint tiles $A(y) \times B(y) \subseteq [k] \times [k] \subseteq V \times V$ for each $y$, with $|A(y)| = |B(y)| = f(y) \ge \deg(y)/8$.

    To implement this in the congested clique model, it is sufficient that each $y$ broadcasts $\deg(y)$ to all other nodes, and then all nodes follow the above procedure to compute $A(y)$ and $B(y)$ locally.
\end{proof}

Now we will use the tiles $A(y) \times B(y)$ to implement the second phase of $4$-cycle detection. For convenience, we will use the following notation for each $y \in Y$:
\begin{itemize}
    \item The sets $N_A(y,a)$ where $a \in A(y)$ form a partition of $N(y)$ with $|N_A(y,a)| \le 8$.
    \item The sets $N_B(y,b)$ where $b \in B(y)$ form a partition of $N(y)$ with $|N_B(y,b)| \le 8$.
\end{itemize}
Note that we can assume that $A(y)$ and $B(y)$ are globally known by Lemma~\ref{lem:4cAB}. Hence a node can compute $N_A(y,a)$ and $N_B(y,b)$ if it knows $N(y)$.

\begin{figure}
    \centering
    \includegraphics[page=1]{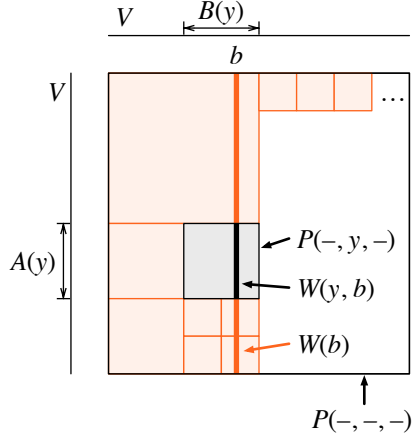}
    \caption{$4$-cycle detection: how $P(*,*,*)$ is partitioned among the nodes.}\label{fig:4cycle}
\end{figure}

With this notation, the algorithm proceeds as follows (see Figure~\ref{fig:4cycle}):
\begin{enumerate}[parsep=1.5ex]
    \item For all $y \in V$ and $a \in A(y)$, node $y$ sends $N_A(y,a)$ to $a$.

    This step can be implemented in $O(1)$ rounds.
    \item For each $y$ and each pair $(a,b) \in A(y) \times B(y)$, node $a$ sends $N_A(y,a)$ to $b$.

    Note that for each $(a,b)$ there is at most one $y$ such that $(a,b) \in A(y) \times B(y)$; hence over each edge we send only $O(1)$ words. Therefore this step can be implemented in $O(1)$ rounds.

    \item At this point, each $b \in V$ has received a copy of $N(y)$ for all $y$ with $b \in B(y)$. Node $b$ computes
    \[
        W(y,b) = N(y) \times \{y\} \times N_B(y,b), \qquad W(b) = \bigcup_{y: b \in B(y)} W(y,b).
    \]

    This is local computation; it takes $0$ rounds.
\end{enumerate}
We now give a lemma that captures the key properties of the algorithm.
\begin{lemma}
    The sets $W(b)$ form a partition of $P(*,*,*)$. Moreover, for each $b$ we have $|W(b)| = O(n)$.
\end{lemma}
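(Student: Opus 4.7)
The plan is to prove the two claims about $W(b)$ separately, exploiting (i) the fact that $N_B(y,\cdot)$ partitions $N(y)$ for each fixed $y$, and (ii) the disjointness of the tiles $A(y) \times B(y)$ from Lemma~\ref{lem:4cAB}.

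For the partition statement, I would argue covering and pairwise disjointness in turn. \textbf{Covering:} any $(x,y,z) \in P(*,*,*)$ satisfies $x,z \in N(y)$; since $\{N_B(y,b)\}_{b \in B(y)}$ partitions $N(y)$, there is a unique $b^{\ast} \in B(y)$ with $z \in N_B(y,b^{\ast})$, whence $(x,y,z) \in N(y) \times \{y\} \times N_B(y,b^{\ast}) = W(y,b^{\ast}) \subseteq W(b^{\ast})$. \textbf{Disjointness:} if $(x,y,z) \in W(b)$, then by definition it lies in $W(y^{\ast},b)$ for some $y^{\ast}$ with $b \in B(y^{\ast})$; the middle coordinate of $W(y^{\ast},b)$ is $y^{\ast}$, so $y^{\ast}=y$, and then $z \in N_B(y,b)$ forces $b$ to be the unique block index in $B(y)$ that the partition assigns to $z$. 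Hence each triple belongs to $W(b)$ for exactly one $b$.

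For the size bound, I would compute
\[
|W(b)| \;\le\; \sum_{y \,:\, b \in B(y)} |N(y)| \cdot |N_B(y,b)| \;\le\; 8 \sum_{y \,:\, b \in B(y)} \deg(y)
\]
using $|N_B(y,b)| \le 8$, and then invoke $\deg(y) \le 8|A(y)|$ (since $|A(y)| \ge \deg(y)/8$ by Lemma~\ref{lem:4cAB}) to reduce the task to showing $\sum_{y \,:\, b \in B(y)} |A(y)| \le n$. This last inequality is exactly where tile disjointness enters: restricting each tile $A(y) \times B(y)$ to the column $V \times \{b\}$ gives either $\emptyset$ (when $b \notin B(y)$) or the column $A(y) \times \{b\}$ of length $|A(y)|$, and since the tiles are pairwise disjoint subsets of $V \times V$, these columns fit disjointly inside the $n$-cell column $V \times \{b\}$. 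Combining the estimates yields $|W(b)| \le 64 n = O(n)$.

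There is no real obstacle; the only subtle point is that a single $b$ may belong to $B(y)$ for many different $y$, so $W(b)$ is a union of potentially many pieces. The two facts that keep things under control are that disjointness of $W(b)$'s is automatic (the triple determines $y$, hence $b$), and that the tile-disjointness in the column through $b$ bounds precisely the weighted sum of $|A(y)|$ that governs $|W(b)|$.
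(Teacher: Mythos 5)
Your proposal is correct and follows essentially the same route as the paper: the partition claim is established via $P(*,y,*)$ being split by the blocks $N_B(y,\cdot)$ with the middle coordinate determining $y$, and the size bound uses $|N_B(y,b)|\le 8$, $\deg(y)\le 8|A(y)|$, and the observation that tile disjointness forces the sets $A(y)$ for all $y$ with $b\in B(y)$ to be disjoint subsets of $V$, yielding the same $64n$ bound. No gaps.
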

\begin{proof}
    For the first claim, observe that the sets $P(*,y,*)$ for $y \in V$ form a partition of $P(*,*,*)$, the sets $W(y,b)$ for $b \in B(y)$ form a partition of $P(*,y,*)$, and each set $W(y,b)$ is part of exactly one $W(b)$.

    For the second claim, let $Y$ consist of all $y \in V$ with $b \in B(y)$. As the tiles $A(y) \times B(y)$ are disjoint for all $y \in Y$, and all $y \in Y$ have the common value $b \in B(y)$, it has to hold that the sets $A(y)$ are disjoint subsets of $V$ for all $y \in Y$. Therefore
    \[
        \sum_{y \in Y} |N(y)|
        = \sum_{y \in Y} \deg(y)
        \le \sum_{y \in Y} 8|A(y)|
        \le 8|V| = 8n.
    \]
    With $|N_B(y)| \le 8$ we get
    \[
        |W(b)| = \sum_{y \in Y} |W(y,b)| \le 8 \sum_{y \in Y} |N(y)| \le 64n. \qedhere
    \]
\end{proof}

Now we are almost done: we have distributed $P(*,*,*)$ evenly among $V$ so that each node only holds $O(n)$ elements. Finally, we use the dynamic routing scheme \cite{lenzen2013optimal} to gather $P(x,*,*)$ at each node $x \in V$; here each node needs to send $O(n)$ words and receive $O(n)$ words, and the running time is therefore $O(1)$ rounds. In conclusion, we can implement both phases of $4$-cycle detection in $O(1)$ rounds.

\thmfourcycles*

%!TEX root = ./0-main.tex

%###
\subsection{Girth}
%###
\label{sec:girth}

%###
\paragraph{Undirected girth.}
%###
Recall that the \emph{girth} $g$ of an undirected unweighted graph $G = (V,E)$ is
the length of the shortest cycle in $G$. To compute the girth in the congested clique model,
we leverage the fast cycle detection algorithm and the following lemma giving a trade-off
between the girth and the number of edges.
A similar approach of bounding from above the number of edges of a graph that contains no copies of some given subgraph was taken by \citet{drucker13}.

% Christoph: I don't think this term is generally known, so introducing it without explanation might confuse.
% who used the Tur\'{a}n number of the desired subgraph to bound the number of edges.

\begin{lemma}[{\cite[pp. 362--363]{Matousek02_geometry}}]\label{lemma:girth}
A graph with girth $g$ has at most $n^{1+1/\lfloor(g-1)/2\rfloor}+n$ edges.
\end{lemma}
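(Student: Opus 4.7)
The plan is to combine a classical Moore-type lower bound on the number of vertices of a graph with fixed minimum degree and girth with an iterative peeling argument. Let $k = \lfloor (g-1)/2 \rfloor$, so the claimed bound reads $|E(G)| \le n(n^{1/k}+1) = n^{1+1/k}+n$. The strategy is to first establish that \emph{any} graph with girth at least $g$ admits a vertex of small degree, and then to delete such vertices one at a time.

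The key intermediate statement is the following: if $H$ is any graph on $n' \ge 1$ vertices with girth at least $g$, then its minimum degree $\delta$ satisfies $\delta \le (n')^{1/k}+1$. To prove it, fix a vertex $v \in H$ and run BFS from $v$, letting $L_i$ denote the set of vertices at distance exactly $i$. The girth hypothesis ensures that through depth $k$ the BFS graph is a tree: any back-edge within levels $L_0,\dotsc,L_k$, or any collision at level $L_{i+1}$ between neighbors of distinct vertices of $L_i$, would close a cycle of length at most $2k \le g-1$, contradicting the girth assumption. Consequently $|L_i| \ge \delta(\delta-1)^{i-1}$ for $1 \le i \le k$, so assuming $\delta \ge 2$,
\[
    n' \;\ge\; 1 + \delta\sum_{i=0}^{k-1}(\delta-1)^i \;\ge\; (\delta-1)^k,
\]
which rearranges to $\delta \le (n')^{1/k}+1$. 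The case $\delta \le 1$ is trivial.

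With this tool in hand, the edge bound follows by peeling. Order the vertices $v_1,\dotsc,v_n$ of $G$ so that $v_i$ has minimum degree in the induced subgraph $G_i$ on $\{v_i,\dotsc,v_n\}$. Deleting vertices cannot create new cycles, so each $G_i$ has girth at least $g$; applying the intermediate statement shows that the degree of $v_i$ in $G_i$ is at most $n^{1/k}+1$. Summing over $i$ counts every edge of $G$ exactly once, yielding $|E(G)| \le n(n^{1/k}+1) = n^{1+1/k}+n$.

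The only delicate step is the BFS accounting. I would verify two disjointness claims by explicitly exhibiting the forbidden short cycle: (i)~if $u \in L_i$ has a neighbor $w$ in some $L_j$ with $j \le i$ other than the BFS parent of $u$, one obtains a closed walk, and hence a cycle, of length at most $2i+1 \le 2k-1$; (ii)~if distinct $u, u' \in L_i$ share a common neighbor in $L_{i+1}$, one obtains a cycle of length at most $2i+2 \le 2k$. Both lengths are strictly less than $g$ exactly because $k = \lfloor (g-1)/2 \rfloor$; this is the only place the specific floor expression enters the argument. After that, the peeling step is completely mechanical.
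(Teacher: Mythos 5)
Your proof is correct. The paper gives no proof of this lemma --- it is quoted from Matou\v{s}ek --- and your argument (Moore-type neighborhood growth under a girth hypothesis, combined with degeneracy peeling to bound the edge count by $n\cdot(\text{max--min--degree})$) is exactly the standard proof of the cited fact, and it does yield the stated additive form $n^{1+1/k}+n$. One tiny imprecision: an edge with both endpoints in $L_k$ only closes a cycle of length at most $2k+1$, which may equal $g$, so the BFS graph need not literally be a tree ``through depth $k$''; but this is harmless, since the counting $|L_{i+1}|\ge(\delta-1)|L_i|$ only requires the expansion and disjointness properties for $i\le k-1$, and those collisions all force cycles of length at most $2k\le g-1$, exactly as you verify.
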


If the graph is dense, then by the above lemma it must have small girth
and we can use fast cycle detection to compute it; otherwise, the graph is sparse and we can learn the complete graph structure.

\begin{theorem}
For undirected graphs, the girth can be computed in $\tilde{O}(n^{\mme})$ rounds (or in $n^{o(1)}$ rounds, 
if $\mme = 0$).
\end{theorem}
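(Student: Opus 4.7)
The plan is to execute the dichotomy already sketched in the excerpt: either $G$ contains a cycle short enough to be caught by the $k$-cycle detector of \theoremref{thm:k-cycles}, or \lemmaref{lemma:girth} forces $G$ to be sparse enough that every node can afford to learn the entire edge set and then compute the girth by purely local work.

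Pick a threshold $k^\star$ whose value I will pin down below. The algorithm has two phases. First, run $k$-cycle detection sequentially for $k = 3, 4, \ldots, k^\star$ and output the smallest $k$ for which a cycle is found. Each call costs $2^{O(k)} n^{\mme} \log n$ rounds, so the phase costs $2^{O(k^\star)} n^{\mme} \log n$ in total. If nothing is found, \lemmaref{lemma:girth} guarantees that $G$ has at most $m \le n^{1 + 1/\lfloor (k^\star - 1)/2 \rfloor} + n$ edges. In the second phase I redistribute these $m$ edges evenly across the $n$ nodes (so each holds $O(m/n)$ of them) and then have every node broadcast its chunk to everyone else; both steps can be carried out by Lenzen's routing scheme~\cite{lenzen2013optimal} in $O(m/n)$ rounds, because each node ends up sending and receiving only $O(m)$ words in total. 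Once every node knows the complete edge set, the girth is computed locally, e.g.\ by BFS from each vertex.

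The only real obstacle is calibrating $k^\star$ so that the two phases share the $\tilde O(n^{\mme})$ budget. For any constant $\mme > 0$, take $k^\star = 2\lceil 1/\mme \rceil + 1$; then $1/\lfloor (k^\star - 1)/2 \rfloor \le \mme$, so $m \le n^{1+\mme} + n$ and the second phase runs in $O(n^{\mme})$ rounds. Because $k^\star$ is a constant depending only on $\mme$, the prefactor $2^{O(k^\star)}$ in the first phase is absorbed into $\tilde O(n^{\mme})$. For the degenerate regime $\mme = 0$, i.e.\ matrix multiplication in $n^{o(1)}$ rounds, I let $k^\star$ grow slowly with $n$, say $k^\star = \lceil \log \log n \rceil$. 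Then $2^{O(k^\star)} = n^{o(1)}$ keeps the cycle-detection sweep within $n^{o(1)}$ rounds, while \lemmaref{lemma:girth} yields $m \le n^{1 + O(1/\log \log n)} = n^{1+o(1)}$, so the broadcast phase also finishes in $n^{o(1)}$ rounds. In both regimes the two phases together meet the claimed bound.
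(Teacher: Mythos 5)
Your proposal is correct and follows essentially the same dichotomy as the paper's proof: detect a short cycle if the girth is below a threshold calibrated to $\mme$ (the paper takes $\ell = \lceil 2 + 2/\mme\rceil$, and $\log\log n$ when $\mme = 0$), and otherwise use \lemmaref{lemma:girth} to conclude the graph has $O(n^{1+\mme})$ edges and learn it entirely. The only cosmetic difference is the order of the two tests -- the paper first broadcasts degrees to check sparsity cheaply and only runs cycle detection on dense graphs, while you run cycle detection first and invoke the lemma in its forward direction -- which does not affect correctness or the asymptotic bound.
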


\begin{proof}
Assume for now that $\mme > 0$, and fix %$\ell = \lceil \log \log n \rceil$.
$\ell = \lceil 2 + 2/\mme\rceil$. 
Each node collects all graph degrees and computes the total number of edges.
If there are at most $n^{1+1/\lfloor \ell /2 \rfloor}+n = O(n^{1 + \mme})$ edges,
we can collect full information about the graph structure to all nodes in $O(n^{\mme})$ rounds using an algorithm of \citet{tritri}, and each node can then compute the girth locally.

Otherwise, by \lemmaref{lemma:girth}, the graph has girth at most $\ell$. Thus, for $k = 3, 4, \dotsc, \ell$, we try to find a $k$-cycle using \theoremref{thm:k-cycles}, in $\ell \cdot 2^{O(\ell )}n^{\mme } \log n = \tilde{O}(n^\mme)$ rounds. When such a cycle is found for some $k$, we stop and return $k$ as the girth.

Finally, if $\mme=0$, we pick $\ell = \log\log n$, and both cases take $n^{o(1)}$ rounds.
\end{proof}

%###
\paragraph{Directed girth.}
%###
For a directed graph, the girth is defined as the length of the shortest directed cycle; the main difference is that directed girth can be $1$ or $2$. While the trade-off of \lemmaref{lemma:girth} cannot be used for directed graphs, we can use a simpler technique of \citet{itai1978finding}.

Let $G=(V,E)$ be a directed graph; we can assume that there are no self-loops in $G$, as otherwise girth is $1$ and we can detect this with local computation. Let $B^{(i)}$ be a Boolean matrix defined as
\[
B^{(i)}_{uv} =
    \begin{cases}
    1 & \text{if there is a path of length $\ell$ from $u$ to $v$ for $1 \le \ell \le i$,}\\
    0 & \text{otherwise.}
    \end{cases}
\]
Clearly, we have that $B^{(1)} = A$. Moreover, if $i = j + k $, we have
\begin{equation}\label{eq:dir-girth}
B^{(i)} = \bigl(B^{(j)} B^{(k)}\bigr) \lor A\,,
\end{equation}
where the matrix product is over the Boolean semiring and $\lor$ denotes element-wise logical or.

\begin{corollary}
For directed graphs, the girth can be computed in $\tilde{O}(n^{\mme})$ rounds.
\end{corollary}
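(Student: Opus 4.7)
The plan is to reduce computing the directed girth to $O(\log^2 n)$ Boolean matrix multiplications, each implemented via \theoremref{thm:mm} by embedding Boolean matrices into $\mathbb{Z}$ and thresholding after every product. Two observations drive the reduction: the case $g = 1$ is already handled by the local self-loop check in the excerpt, and for $g \ge 2$ the girth equals the smallest $i$ such that $B^{(i)}$ has a nonzero diagonal entry. Indeed, $B^{(i)}_{uu} \neq 0$ witnesses a closed walk of some length $\ell \le i$ from $u$, and by iteratively extracting sub-walks one obtains a simple directed cycle of length at most $\ell \le i$; conversely any cycle of length $g \le i$ yields $B^{(i)}_{uu} \neq 0$ at any vertex $u$ on it. The squaring specialisation $B^{(2i)} = (B^{(i)})^2 \lor A$ of \eqref{eq:dir-girth} then supplies the doubling primitive I need.

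First I would execute a doubling phase: compute $B^{(2^k)}$ for $k = 0, 1, \dotsc, \lceil \log_2 n \rceil$ by repeated squaring, using $O(\log n)$ matrix multiplications. After each squaring, a single round of broadcast lets every node learn whether any diagonal entry of the current matrix is nonzero. Let $k^\ast$ be the smallest index for which this happens. If no such $k^\ast$ exists, the graph is acyclic and we report $g = \infty$; otherwise the girth lies in $(2^{k^\ast - 1}, 2^{k^\ast}]$.

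Next I would binary search in $(2^{k^\ast-1}, 2^{k^\ast}]$ for the exact girth. For a probe value $i$ with binary representation $i = 2^{k_1} + 2^{k_2} + \dotsb$, the general recurrence $B^{(a+b)} = B^{(a)} B^{(b)} \lor A$ lets me assemble $B^{(i)}$ from the stored matrices $B^{(2^{k_j})}$ using $O(\log n)$ additional multiplications. With $O(\log n)$ probes, this phase contributes $O(\log^2 n)$ matrix multiplications; combined with phase one, the whole algorithm uses $O(\log^2 n)$ calls to \theoremref{thm:mm} and therefore runs in $\tilde O(n^{\mme})$ rounds.

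The only real subtlety I would verify carefully is that thresholding $\mathbb{Z}$-valued products back to $\{0,1\}$ after each multiplication is both correct and keeps entry sizes at $O(\log n)$ bits. Since every operand maintained across rounds has $0/1$ entries, each ring product entry lies in $\{0, 1, \dotsc, n\}$ and its nonzero pattern coincides with the Boolean-semiring product, so $\mathbb{Z}$-multiplication followed by thresholding faithfully realises the Boolean operation in \eqref{eq:dir-girth}. This keeps the entry encoding at $O(\log n)$ bits throughout, which is the regime in which \theoremref{thm:mm} achieves $n^{\mme}$ rounds per multiplication.
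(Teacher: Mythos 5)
Your proposal is correct and follows essentially the same route as the paper: repeated squaring of $B^{(i)}$ to bracket the girth in an interval $(i/2, i]$, followed by binary search using the recurrence \eqref{eq:dir-girth}, with each Boolean product realised by an integer product plus thresholding. The only (immaterial) difference is bookkeeping: by maintaining $B^{(\mathrm{lo})}$ and multiplying in one stored power of two per probe, the binary search needs only $O(\log n)$ multiplications rather than the $O(\log^2 n)$ you count, but both are absorbed by the $\tilde O(n^{\mme})$ bound.
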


\begin{proof}
It suffices to find smallest $\ell$ such that there is $v \in V$ with $B^{(\ell)}_{vv} = 1$; clearly $\ell$ is then the girth of graph $G$. We first compute $A = B^{(1)}, B^{(2)}, B^{(4)}, B^{(8)}, \dotsc$ using (\ref{eq:dir-girth}) with $j=k=i/2$ until we find $i$ such that $B^{(i)}_{vv} = 1$ for some $v \in V$. We then know that the girth is between $i$ and $i/2$; we can perform binary search on this interval to find the girth, using (\ref{eq:dir-girth}) to evaluate the intermediate matrices. This requires $O(\log n)$ calls to the matrix multiplication algorithm.
\end{proof} 

%!TEX root = ./0-main.tex

%###
\subsection{Routing and Shortest Paths}
%###
\label{sec:routing}

In this section, we present algorithms for variants of the all-pairs shortest paths (APSP) problem. In the congested clique model, the local input for a node $u \in V$ in the APSP problem is a vector containing the local edge weights $W(u,v)$ for $v \in V$. The output for $u \in V$ is the actual shortest path distances $d(u,v)$ for each other node $v \in V$, along with the \emph{routing table} entries $R[u,v]$, where each entry $R[u,v] = w \in V$ is a node such that $(u,w) \in V$ and $w$ lies on a shortest path on from $u$ to $w$. For convenience, we use the same notation for directed and undirected graphs, assume $W(u,v) = \infty$ if $(u,v) \notin E$, and for unweighted graphs, we set $W(u,v) = 1$ for each $(u,v) \in E$.

For a graph $G = (V,E)$ with edge weights $W$, we define the \emph{weight matrix $W$} as
\[
W_{uv} =
    \begin{cases}
    W(u,v) & \text{if } u \not= v \,,\\
    0 & \text{if } u = v\,.
    \end{cases}
\]
Our APSP algorithms are mostly based on the manipulation of the weight matrix $W$ and the adjacency matrix $A$, as defined in \sectionref{sec:subgraphs}.

%###
\paragraph{Distance product and iterated squaring.}
%###
Matrix multiplication can be used to compute the shortest path distances via \emph{iterated squaring} of the weight matrix over the min-plus semiring~\cite{furman1970application, Munro197156, fm1971boolean}. That is, the matrix product is the \emph{distance product}, also known as the \emph{min-plus product} or \emph{tropical product}, defined as
\[ (S \star T)_{uv} = \min_{w} \bigl( S_{uw} + T_{wv} \bigr)\,.\]
Given a graph $G = (V,E)$ with weight matrix $W$, the $n^\text{th}$ distance product power $W^{n}$ gives the actual distances in $G$ as $d(v,u) = W^{ n}_{vu}$. Computing $W^{n}$ can be done with $\lceil \log n \rceil$ distance products by iteratively squaring $W$, that is, we compute
\[
   W^{2} = W \star  W\,, \hspace{5mm} W^{4} = W^{2} \star  W^{2}\,,\hspace{5mm}\dotsc,\hspace{5mm} W^{n} = W^{ n/2} \star W^{n/2}\,.
\]
Combining this observation with the semiring algorithm from \theoremref{thm:mm}, we immediately obtain a simple APSP algorithm for the congested clique.

\corrapspsemi*

The subsequent APSP algorithms we discuss in this section are, for the most part, similarly based on the iterated squaring of the weight matrix; the main difference is that we replace the semiring matrix multiplication with distance product algorithms derived from the fast matrix multiplication algorithm.

%###
\paragraph{Constructing routing tables.}
%###
The iterated squaring algorithm of \corollaryref{thm:apsp-semiring} can be adapted to also compute a routing table $R$ as follows. Assume that our distance product algorithm also provides for the distance product $S \star T$ a \emph{witness matrix $Q$} such that if $Q_{uv} = w$, then $(S \star T)_{uv} = S_{uw} + T_{wv}$. With this information, we can compute the routing table $R$ during the iterated squaring algorithm; when we compute the product $W^{2i} = W^{i} \star W^{i}$, we also obtain a witness matrix $Q$, and update the routing table by setting
\[ R[u,v] = R[u, Q_{uv}] \]
for each $u,v\in V$ with $W^{2i}_{uv} < W^{i}_{uv}$.

The semiring matrix multiplication can be easily modified to produce witnesses, but for the subsequent distance product algorithms based on fast matrix multiplication this is not directly possible. However, we can apply known techniques from the centralised setting to obtain witnesses also in these cases~\cite{alon1996derandomization,Seidel1995400,zwick2002all}; we refer to \sectionref{sec:witnesses} for details.

%###
\paragraph{Unweighted undirected APSP.}
%###
In the case of unweighted undirected graphs, we can obtain exact all-pairs shortest paths via a technique of \citet{Seidel1995400}. Specifically, let $G = (V,E)$ an unweighted undirected graph with adjacency matrix $A$; the \emph{$k^\text{th}$ power $G^k$} of $G$ is a graph with node set $V$ and edge set $\{ \{ u, v \} \colon d(u,v) \le k \}$. In particular, the \emph{square graph} $G^2$ can be constructed in $O(n^\mme)$ rounds from $G$, as the adjacency matrix of $G^2$ is $A^2 \lor A$, where the product is over the Boolean semiring and $\lor$ denotes element-wise logical or.

The following lemma of Seidel allows us to compute distances in $G$ if we already know distances in the square graph $G^2$; to avoid ambiguity, we write in this subsection $d_G(u,v)$ for the distances in a graph $G$.

\begin{lemma}[\cite{Seidel1995400}]\label{lemma:seidel}
Let $G = (V,E)$ be an unweighted undirected graph with adjacency matrix $A$, and let $D$ be a distance matrix for $G^2$, that is, a matrix with the entries $D_{uv} = d_{G^2}(u,v)$. Let $S = DA$, where the product is computed over integers. We have that
 \begin{equation*}
      d_{G}(u,v) =
      \begin{cases}
           2 d_{G^2}(u,v) & \text{if $S_{uv} \ge d_{G^2}(u,v)\deg_{G}(v)$, and}\\
          2 d_{G^2}(u,v) - 1 &  \text{if $S_{uv} < d_{G^2}(u,v)\deg_{G}(v)$.}
      \end{cases}
 \end{equation*}
\end{lemma}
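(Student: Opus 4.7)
The plan is to exploit the standard relation $d_{G^2}(u,v) = \lceil d_G(u,v)/2 \rceil$, which holds because an edge of $G^2$ corresponds to a walk of length one or two in $G$. Writing $k = d_{G^2}(u,v)$, this forces $d_G(u,v) \in \{2k-1, 2k\}$, so the task reduces to distinguishing these two possibilities using the quantity $S_{uv} = \sum_{w \in N_G(v)} d_{G^2}(u,w)$. The argument will be a parity-based case analysis, comparing each summand $d_{G^2}(u,w)$ against the reference value $k$.

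Before splitting into cases, I would record the following auxiliary observation: for every neighbor $w$ of $v$ in $G$, the value $d_{G^2}(u,w)$ lies in $\{k-1, k, k+1\}$. This follows from $|d_G(u,w) - d_G(u,v)| \le 1$ (triangle inequality along the edge $\{v,w\}$) after applying $\lceil \cdot /2 \rceil$ to both sides.

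For the even case $d_G(u,v) = 2k$: if some neighbor $w$ of $v$ satisfied $d_G(u,w) \le 2k - 2$, then $v$ itself would inherit $d_G(u,v) \le 2k - 1$, a contradiction. Hence every neighbor $w$ has $d_G(u,w) \ge 2k - 1$ and therefore $d_{G^2}(u,w) \ge k$, which yields $S_{uv} \ge k \deg_G(v) = d_{G^2}(u,v) \deg_G(v)$, matching the first branch. For the odd case $d_G(u,v) = 2k - 1$: there exists at least one neighbor $w^\star$ of $v$ lying on a shortest $u$-$v$ path in $G$, so $d_G(u,w^\star) = 2k - 2$ and hence $d_{G^2}(u,w^\star) = k - 1$. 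Combined with the uniform upper bound $d_{G^2}(u,w) \le k$ for every other neighbor (which follows from the auxiliary observation together with $d_G(u,w) \le d_G(u,v) + 1 = 2k$), this gives $S_{uv} \le k \deg_G(v) - 1 < d_{G^2}(u,v)\deg_G(v)$, matching the second branch.

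The two cases are exhaustive and the two $S_{uv}$-inequalities are mutually exclusive, so the claimed formula follows. The only delicate point is justifying the uniform upper bound $d_{G^2}(u,w) \le k$ across all neighbors in the odd case; this rests precisely on the triangle-inequality step used in the auxiliary observation, and once it is in hand the rest of the argument is pure bookkeeping on parities.
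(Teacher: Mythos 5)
Your proof is correct, and it is essentially Seidel's original argument; the paper itself states this lemma with a citation to Seidel and does not reprove it. The case analysis on the parity of $d_G(u,v)$, the identity $d_{G^2}(u,v)=\lceil d_G(u,v)/2\rceil$, and the two bounds on $S_{uv}=\sum_{w\in N_G(v)}d_{G^2}(u,w)$ are exactly the standard route (the only unaddressed corner is pairs at infinite distance, which the paper's application handles separately).
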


We can now recover all-pairs shortest distances in an undirected unweighted graph by recursively applying \lemmaref{lemma:seidel}.

\corseidel*

\begin{proof}
Let $G = (V,E)$ be an unweighted undirected graph with adjacency matrix $A$. We first compute the adjacency matrix for $G^2$; as noted above, this can be done in $O(n^\mme)$ rounds. There are now two cases to consider.
\begin{enumerate}
    \item If $G = G^2$, then $d_G(u,v) = 1$ if $u$ and $v$ are adjacent in $G$, and $d_G(u,v) = \infty$ otherwise; thus, we are done.
    \item Otherwise, we compute all-pairs shortest path distances in the graph $G^2$; since we have already constructed the adjacency matrix for $G^2$, we can do the distance computation in $G^2$ by recursively calling this algorithm with input graph $G^2$. Then, we construct the matrix $D$ with entries $D_{uv} = d_{G^2}(u,v)$ as in \lemmaref{lemma:seidel} and compute $S = DA$. We can recover distances in $G$ using \lemmaref{lemma:seidel}, as each node can transmit their degree in $G$ to each other node in a single round and then check the conditions of the lemma locally.
\end{enumerate}
The recursion terminates in $O(\log n)$ calls, as the graph $G^n$ consists of disjoint cliques.
\end{proof}

%###
\paragraph{Weighted APSP with small weights.}
%###
By embedding the distance product of two matrices into a suitable ring, we can use fast ring matrix multiplication to compute all-pairs shortest distances~\cite{Yuval76_apsp}; however, this is only practical for very small weights, as the ring embedding exponentially increases the amount of bits required to transmit the matrix entries. The following lemma encapsulates this idea.

\begin{lemma}\label{lemma:min-sum-emb}
Given $n \times n$ matrices $S$ and $T$ with entries in $\{ 0, 1, \dotsc, M \} \cup \{ \infty \}$, we can compute the distance product $S \star T$
in $O(M n^{\mme})$ rounds.
\end{lemma}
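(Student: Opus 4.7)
The plan is to apply the classical embedding of Yuval that realises a min-plus product as a single ring product over the integers, and then invoke the fast ring matrix multiplication from \theoremref{thm:mm}. The catch is that the embedding inflates the bit-length of entries by a factor proportional to $M$, which is precisely where the $M$ in the running time will come from via the $b/\log n$ overhead noted after \theoremref{thm:mm}.

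Concretely, fix $x = n+1$ and define encoded matrices $S', T'$ over $\Z$ by
\[
    S'_{uv} = \begin{cases} x^{M - S_{uv}} & \text{if } S_{uv} \ne \infty, \\ 0 & \text{if } S_{uv} = \infty, \end{cases}
\]
and analogously for $T'$. Each node can compute its row of $S'$ and $T'$ locally from its row of $S$ and $T$ in zero communication rounds. Then
\[
    (S' T')_{uv} = \sum_{w} x^{2M - (S_{uw} + T_{wv})},
\]
where the convention is that terms with an $\infty$ summand contribute $0$. Let $k = 2M - \min_w (S_{uw} + T_{wv})$, with $k = -\infty$ meaning the entry is $0$. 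Since every nonzero term has exponent at most $k$ and there are at most $n$ terms, we have $x^k \le (S'T')_{uv} \le n x^k < x^{k+1}$, so $k = \lfloor \log_x (S'T')_{uv} \rfloor$ is recoverable locally, yielding $(S \star T)_{uv} = 2M - k$ (or $\infty$ if the entry is zero).

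It remains to analyse the cost of computing $S'T'$. The entries of $S'$ and $T'$ are nonnegative integers bounded by $x^M = (n+1)^M$, and the entries of $S'T'$ are bounded by $n x^{2M} \le (n+1)^{2M+1}$, so they can be encoded in $b = O(M \log n)$ bits. By \theoremref{thm:mm} (specifically the bit-length overhead remark following it), this ring product can be computed in $O\bigl(n^{1-2/\omega + \varepsilon} \cdot b/\log n\bigr) = O(M n^{\mme})$ rounds, as claimed.

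The only subtlety is the bit-length bookkeeping: one must check that the final encoded entries still fit the "$b$ bits implies factor $b/\log n$ overhead" clause, i.e.\ that $b = O(M \log n)$ gives exactly the $O(M n^{\mme})$ bound after absorbing the $n^{\varepsilon}$ slack into the $n^\mme$ notation, which is the convention adopted earlier in the paper. Everything else — the encoding, the application of fast matrix multiplication, and the local decoding — is routine.
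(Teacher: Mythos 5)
Your proof is correct and follows essentially the same route as the paper: both use Yuval's embedding of the min-plus product into a ring product with $O(M\log n)$-bit entries and then invoke the fast ring multiplication of \theoremref{thm:mm} with the $b/\log n$ overhead. The only cosmetic difference is that the paper works formally over $\Z[X]$ and reads off the lowest-degree monomial, whereas you evaluate at $x=n+1$ with exponents flipped to $M-w$ so that the answer can be recovered as $\lfloor\log_x(\cdot)\rfloor$; the bit-length and round bounds are identical.
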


\begin{proof}
We construct matrices $S^*$ and $T^*$ by replacing each matrix entry $w$ with $X^w$, where $X$ is a formal variable; values  $\infty$ are replaced by $0$. We then compute the product $S^* \cdot T^*$ over the polynomial ring $\Z[X]$; all polynomials involved in the computation have degree at most $2M$ and their coefficients are integers of absolute value at most $n^{O(1)}$, so this computation can be done in $O(M n^{\mme})$ rounds. Finally, we can recover each matrix entry $(S \star T)_{uv}$ in the original distance product by taking the degree of the lowest-degree monomial in $(S^* \cdot T^*)_{uv}$.
\end{proof}

% By \lemmaref{lemma:witness}, we can also recover witnesses for the product $S \star T$ in the above setting in the same time.

Using iterated squaring in combination with \lemmaref{lemma:min-sum-emb}, we can compute all-pairs shortest paths up to a small distance $M$ quickly; that is, we want to compute a matrix $B$ such that
\begin{equation*}
B_{uv} =
\begin{cases}
d(u,v) & \text{if $d(u,v) \le M$, and}\\
\infty & \text{if $d(u,v) > M$.}
\end{cases}
\end{equation*}
This can be done by replacing all weights over $M$ with $\infty$ before each squaring operation to ensure that we do not operate with too large values, giving us the following lemma.

\begin{lemma}\label{lemma:apsp-poly}
Given a directed, weighted graph with non-negative integer weights, we can compute all-pairs shortest paths up to distance $M$ in $O(M n^{\mme})$ rounds.
\end{lemma}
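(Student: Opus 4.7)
The natural approach is \emph{truncated iterated squaring} of the weight matrix. Let $W$ be the weight matrix of $G$, and define $W_M$ to be the matrix obtained from $W$ by replacing every entry that is strictly greater than $M$ by $\infty$. Starting from $B^{(0)} = W_M$, I would iteratively compute
\[
    B^{(k+1)} = \bigl( B^{(k)} \star B^{(k)} \bigr)_M\,,
\]
where $(\cdot)_M$ denotes the truncation operation above, and $\star$ is the distance product. After $\lceil \log_2 n \rceil$ steps, output $B = B^{(\lceil \log_2 n \rceil)}$ as the truncated all-pairs distance matrix.

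\textbf{Correctness.} The invariant I would maintain by induction on $k$ is that $B^{(k)}_{uv}$ equals the length of the shortest walk from $u$ to $v$ with at most $2^k$ edges, provided this length is at most $M$; otherwise $B^{(k)}_{uv} > M$ (or $\infty$). The base case $k=0$ is immediate. For the inductive step, the ordinary distance product $B^{(k)} \star B^{(k)}$ would compute the min over an intermediate node $w$ of $B^{(k)}_{uw} + B^{(k)}_{wv}$, which gives the correct walk length whenever the optimal intermediate distances are both at most $M$. The crucial point is that if the overall answer is at most $M$, then by non-negativity of the weights both intermediate distances are also at most $M$, so truncating entries larger than $M$ in $B^{(k)}$ cannot destroy any optimum that would have been at most $M$. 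Since shortest paths in a graph on $n$ vertices use at most $n-1$ edges, iterating $\lceil \log_2 n \rceil$ times gives the claimed truncated APSP matrix.

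\textbf{Running time.} Each squaring step applies \lemmaref{lemma:min-sum-emb} to two matrices with entries in $\{0, 1, \ldots, 2M\} \cup \{\infty\}$ (the sum of two values at most $M$ is at most $2M$; after truncation it is at most $M$ again). By that lemma, each squaring costs $O(M n^{\mme})$ rounds. The truncation step itself is local computation and costs $0$ rounds. With $\lceil \log_2 n \rceil$ iterations, the total cost is $O(M n^{\mme} \log n)$, which is absorbed into $O(M n^{\mme})$ under the convention that $n^{\mme}$ swallows factors of the form $n^{\varepsilon}$ (and in particular polylogarithmic factors), as declared in the introduction.

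\textbf{Main obstacle.} The only delicate part is justifying that the intermediate truncation does not corrupt entries whose true distance is at most $M$. I expect this to be the main conceptual step, and it rests on the optimal-substructure property together with the assumption of \emph{non-negative} weights; with negative weights the argument would fail because a long intermediate walk could be compensated by a negative continuation. Everything else reduces mechanically to invoking \lemmaref{lemma:min-sum-emb} at each of the $O(\log n)$ squaring steps.
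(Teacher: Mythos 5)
Your proposal is correct and follows exactly the route the paper takes (which it only sketches in one sentence before the lemma): truncate entries above $M$ to $\infty$ before each of the $O(\log n)$ squarings, compute each distance product via \lemmaref{lemma:min-sum-emb}, and rely on non-negativity of the weights so that truncation cannot destroy any sub-$M$ optimum. Your write-up merely fills in the induction and the absorption of the $\log n$ factor, both consistent with the paper's conventions.
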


The above lemma can be used to compute all-pairs shortest paths quickly assuming that the \emph{weighted diameter} of the graph is small; recall that the weighted diameter of a weighted graph is the maximum distance between any pair of nodes.

\cordiamapsp*

\begin{proof}
If we know that the weighted diameter is $U$, we can simply apply \lemmaref{lemma:apsp-poly} with $M = U$. However, if we do not know $U$ beforehand, we can (1)~first compute the reachability matrix of the graph from the unweighted adjacency matrix, (2)~guess $U = 1$ and compute all-pairs shortest paths up to distance $U$, and (3)~check if we obtained distances for all pairs that are reachable according to the reachability matrix; if not, then we double our guess for $U$ and repeat steps (2) and (3).
\end{proof}

%###
\paragraph{Approximate weighted APSP.}
%###
We can leverage the above result and a rounding technique to obtain a fast $(1+o(1))$-approximation algorithm for the weighted directed APSP problem. Similar rounding-based approaches were previously used by \citet{zwick2002all} in a centralised setting and by \citet{nanongkai14} in the distributed setting; however, the idea can be traced back much further~\cite{raghavan85}.

We first consider the computation of a $(1+\delta)$-approximate distance product over integers for a given $\delta>0$; the following lemma is an analogue of one given by \citet{zwick2002all} in a centralised setting.

\begin{lemma}\label{lemma:approx-dp}
Given $n \times n$ matrices $S$ and $T$ with entries in $\{ 0, 1, \dotsc, M \} \cup \{ \infty \}$, we can compute a matrix $\tilde{P}$ satisfying
\[ P_{uv} \le \tilde{P}_{uv}\le (1 + \delta)P_{uv} \hspace{10mm}\text{for $u,v \in V$}\,, \]
where $P = S \star T$ is the distance product of $S$ and $T$, in $O\bigl(n^{\mme} (\log_{1+\delta} M) / \delta \bigr)$ rounds.
\end{lemma}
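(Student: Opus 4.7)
The plan is to follow the standard Zwick-style scaling scheme: try all geometric scales $R = (1+\delta)^r$ for $r = 0, 1, \dotsc, \lceil \log_{1+\delta} M \rceil$, rescale and round the matrix entries so that the rounded entries lie in $\{0, 1, \dotsc, O(1/\delta)\} \cup \{\infty\}$, then apply the exact distance-product algorithm of \lemmaref{lemma:min-sum-emb} at each scale and take the pointwise minimum of the resulting matrices.

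Concretely, at scale $r$ with $R = (1+\delta)^r$, I would form matrices $\hat{S}^{(r)}$ and $\hat{T}^{(r)}$ by setting every entry exceeding $R$ to $\infty$ and replacing each remaining entry $S_{uw}$ by $\hat{S}^{(r)}_{uw} = \lceil 2 S_{uw} / (\delta R) \rceil$, and analogously for $T$. The surviving entries lie in $\{0, 1, \dotsc, \lceil 2/\delta \rceil\}$, so \lemmaref{lemma:min-sum-emb} (applied to the $n \times n$ matrices $\hat{S}^{(r)}, \hat{T}^{(r)}$ with bound $M' = O(1/\delta)$) computes the exact distance product $\hat{P}^{(r)} = \hat{S}^{(r)} \star \hat{T}^{(r)}$ in $O(n^{\mme}/\delta)$ rounds. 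Multiplying the entries by $\delta R / 2$ yields a candidate approximation $\tilde{P}^{(r)}_{uv}$. The final answer is $\tilde{P}_{uv} = \min_r \tilde{P}^{(r)}_{uv}$. Since there are $O(\log_{1+\delta} M)$ scales, the total cost is $O\!\left(n^{\mme}(\log_{1+\delta} M)/\delta\right)$, as required.

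For correctness, I need the two inequalities $P_{uv} \le \tilde{P}^{(r)}_{uv}$ (at every scale) and $\tilde{P}^{(r^*)}_{uv} \le (1+\delta)P_{uv}$ (at the right scale). The lower bound is immediate because ceiling rounding gives $\hat{S}^{(r)}_{uw}\cdot(\delta R/2) \ge S_{uw}$ and $\hat{T}^{(r)}_{wv}\cdot(\delta R/2) \ge T_{wv}$ for every surviving $w$, while discarded entries only raise the min; hence every candidate is a valid upper bound on $P_{uv}$. For the upper bound, pick $r^*$ with $(1+\delta)^{r^*-1} < P_{uv} \le (1+\delta)^{r^*}$ and let $w^*$ be an optimal witness for $P_{uv}$. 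Both $S_{uw^*}$ and $T_{w^*v}$ are at most $P_{uv} \le R$, so neither is truncated, and one checks $(\hat{S}^{(r^*)}_{uw^*} + \hat{T}^{(r^*)}_{w^*v})\cdot(\delta R/2) \le P_{uv} + \delta R \le P_{uv}(1 + \delta(1+\delta))$. Since the additive blow-up is $\delta R \le \delta(1+\delta) P_{uv}$, replacing $\delta$ by a constant multiple of itself absorbs the $(1+\delta)$ factor, preserving the rate in the lemma.

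The main point requiring care is the one-sided error: the rounding direction, the truncation of large entries to $\infty$, and the definition of $r^*$ must cooperate so that the chosen witness $w^*$ survives and the output never falls below $P_{uv}$. The special cases $P_{uv}=0$ (all scales give $0$ trivially) and $P_{uv}=\infty$ (all scales give $\infty$) are handled automatically by the minimum. Everything else is routine: degree arithmetic and a single invocation of \lemmaref{lemma:min-sum-emb} per scale, with the scaling constants absorbed into the $O(\cdot)$ notation of the final bound.
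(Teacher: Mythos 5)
Your proposal is correct and takes essentially the same approach as the paper: truncate large entries to $\infty$ at each geometric scale, round the rest up to integers of size $O(1/\delta)$, compute each exact scaled distance product via \lemmaref{lemma:min-sum-emb}, and take a pointwise minimum. The only cosmetic difference is that the paper chooses the scale $j=\lfloor\log_{1+\delta}(\delta P_{uv}/2)\rfloor$ so as to get the factor $(1+\delta)$ directly, whereas you obtain $\bigl(1+\delta(1+\delta)\bigr)$ and then rescale $\delta$ by a constant, which is harmless.
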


\begin{proof}
For $i\in \{0,\ldots,\lceil \log_{1+\delta} M\rceil\}$, let $S^{(i)}$ be the matrix defined as
\begin{equation*}
S^{(i)}_{uv} =
\begin{cases}
    \lceil S_{uv} / (1 + \delta)^i \rceil & \text{if $S_{uv} \le 2(1+\delta)^{i+1}/\delta$, and}\\
    \infty                                & \text{otherwise,}
\end{cases}
\end{equation*}
and let $T^{(i)}$ be defined similarly for $T$. Furthermore, let us define $P^{(i)} = S^{(i)} \star T^{(i)}$. We now claim that selecting
\[ \tilde{P}_{uv} = \min_{i} \bigl\{ \lfloor (1 + \delta)^{i} P^{(i)}_{uv}\rfloor \bigr\} \]
gives a matrix $\tilde{P}$ with the desired properties.

It follows directly from the definitions that $P_{uv} \le \tilde{P}_{uv}$, so it remains to prove the other inequality. Thus, let us fix $u,v \in V$, and let $w \in V$ be such that
\[ P_{uv} = S_{uw} + T_{wv}\,.\]
Finally, let $j = \lfloor \log_{1+\delta} (\delta P_{uv}/2)\rfloor$. The choice of $j$ means that $P_{uv} \le 2(1+\delta)^{j+1}/\delta$; since $S_{uw}$ and $T_{wv}$ are bounded from above by $P_{uv}$, the entries $S^{(j)}_{uw}$ and $T^{(j)}_{wv}$ are finite. Furthermore, we have
\[ (1 + \delta)^{j} S^{(j)}_{uw} \le S_{uw} + (1 + \delta)^j\,,\hspace{10mm} (1 + \delta)^{j} T^{(i)}_{wv} \le T_{wv} + (1 + \delta)^j\,, \]
and therefore
\begin{align*}
(1+ \delta)^{j} P^{(j)}_{uv} & \le (1+ \delta)^{j}\bigl( S^{(j)}_{uw} + T^{(j)}_{wv}\bigr) \\
                         & \le S_{uw} + T_{wv} + 2 (1+ \delta)^j\\
                         & \le P_{uv} + \delta P_{uv} = (1 + \delta)P_{uv}\,.
\end{align*}
Finally, we have $\tilde{P}_{uv} \le \lfloor (1+ \delta)^{j} P^{(j)}_{uv} \rfloor \le (1+\delta) P_{uv}$.

To see that we can compute the matrix $\tilde{P}$ in the claimed time, we first note that each of the matrices $S^{(i)}$ and $T^{(i)}$ can be constructed locally by the nodes. The product $P^{(i)} = S^{(i)} \star T^{(i)}$ can be computed in $O(n^{\mme}/\delta)$ rounds for a single index $i$ by \lemmaref{lemma:min-sum-emb}, as the entries of $S^{(i)}$ and $T^{(i)}$ are integers bounded from above by $O(1/\delta)$; this is repeated for each index $i$, and the number of iterations is thus $O(\log_{1 + \delta} M )$. Finally, the matrix $\tilde{P}$ can be constructed from matrices $P^{(i)}$ locally.
\end{proof}

Using \lemmaref{lemma:approx-dp}, we obtain a $(1+o(1))$-approximate APSP algorithm.

\thmapxapsp*

\begin{proof}
Let $G = (V,E)$ be a directed weighted graph with edge weights in $\{ 0, 1, \dotsc, M \}$, where $M = 2^{n^{o(1)}}$.
To compute the approximate shortest paths,
we apply iterated squaring over the min-plus semiring to the weight matrix $W$ of $G$,
but use the approximate distance product algorithm of \lemmaref{lemma:approx-dp} to compute the products.
After $\lceil \log n \rceil$ iterations, we obtain a matrix $\tilde{D}$;
by induction we have
\[ d(u,v) \le \tilde{D}_{uv} \le (1+\delta)^{\lceil \log n \rceil}d(u,v)\hspace{10mm}\text{for $u,v \in V$}\,.\]
Selecting $\delta = o(1/\log n)$, this gives a $(1 + o(1))$-approximation for the shortest distances.

To analyse the running time, we observe that we call the algorithm of \lemmaref{lemma:approx-dp} $\lceil \log n \rceil$ times; as the maximum distance between nodes in $G$ is $nM = 2^{n^{o(1)}}$, the running time of each call is bounded by
\[ O\left(\frac{n^{\mme} \log_{1+\delta}(nM)}{\delta} \right) = O\left(\frac{n^{\mme + o(1)}}{\delta \log (1+\delta)}\right)\,. \]
For sufficiently small $\delta$, we have $1 / \bigl(\delta \log (1+\delta)\bigr) = O(1 / \delta^2)$. Thus, for, e.g., $\delta = 1/\log^2 n = o(1/\log n)$, the total running time is $O(n^{\mme + o(1)})$, as the polylogarithmic factors are subsumed by $n^{o(1)}$.
\end{proof}

%!TEX root = ./0-main.tex

%###
\subsection{Witness Detection for Distance Product}
%###
\label{sec:witnesses}

%###
\paragraph{Witness problem for the distance product.}
%###
As noted in \sectionref{sec:routing}, to recover the routing table in the APSP algorithms based on fast matrix multiplication in addition to computing the shortest path lengths, we need the ability to compute a \emph{witness matrix} for the distance product $S \star T$. That is, we need to find a matrix $Q$ such that if $Q_{uv} = w$, then $(S \star T)_{uv} = S_{uw} + T_{wv}$; in this case, the index $w$ is called a \emph{witness} for the pair $(u,v)$.

While one can easily modify the semiring matrix multiplication algorithm to provide witnesses, this is not directly possible with the fast matrix multiplication algorithms.  However, known techniques from centralised algorithms~\cite{alon1996derandomization,Seidel1995400,zwick2002all} can be adapted to the congested clique to bridge this gap.

\begin{lemma}\label{lemma:witness}
If we can compute the distance product for two $n \times n$ matrices $S$ and $T$ in $M$ rounds, we can also find a witness matrix for $S \star T$ in $M \operatorname{polylog}(n)$ rounds.
\end{lemma}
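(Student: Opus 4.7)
The plan is to reduce the witness problem to $\operatorname{polylog}(n)$ invocations of the given distance product routine, adapting the classical isolation technique of \cite{alon1996derandomization,Seidel1995400,zwick2002all}. First, compute $P = S \star T$ in $M$ rounds; this fixes the target values against which candidate witnesses will be tested. Next, note that for any subset $W \subseteq V$, the \emph{restricted} distance product $P^{(W)}$ defined by $P^{(W)}_{uv} = \min_{w \in W}(S_{uw} + T_{wv})$ can be computed in $M$ rounds as well: each node $v$ already holds row $v$ of both $S$ and $T$ and can locally mask out the columns of its $S$-row indexed by $w \notin W$, and replace its entire $T$-row by $\infty$ if $v \notin W$, before invoking the given algorithm. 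Any $W$ can be broadcast in $O(1)$ additional rounds, or generated from shared randomness. The fundamental test is that $W$ contains a witness for $(u,v)$ if and only if $P^{(W)}_{uv} = P_{uv}$.

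If every pair had a \emph{unique} witness $w^{*}$, one could recover $w^{*}$ bit by bit in $\lceil \log n \rceil$ products: for each bit position $i$, compute $P^{(W_i)}$ with $W_i = \{ w : \text{bit } i \text{ of } w \text{ equals } 0\}$, and read off the bits of $w^{*}$ from the equality test. To reduce the general case to this one, I would use the classical isolation step. For each scale $k = 2^0, 2^1, \dotsc, 2^{\lceil \log n \rceil}$, draw $\Theta(\log n)$ uniformly random subsets $W$ of size $\lceil n / (2k) \rceil$; for any pair whose number of witnesses lies in $(k/2, k]$, a single random $W$ of this size contains exactly one witness with constant probability, so $\Theta(\log n)$ repetitions isolate some witness with high probability. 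For each such trial $W$, run the bit-decomposition search restricted to $W$ (i.e., intersect $W$ with each $W_i$ before invoking the product), recovering the isolated witness in $O(\log n)$ further products.

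The main obstacle is the correctness analysis: \emph{every} pair $(u,v)$ with at least one witness must be isolated in some trial at the appropriate scale. A union bound over the $O(\log^2 n)$ trials and $n^2$ pairs, together with the boosting above, makes the failure probability inverse polynomial. Summing $O(\log n)$ scales, $O(\log n)$ trials per scale, and $O(\log n)$ products per bit-decomposition search gives $O(\log^3 n)$ calls to the distance product routine, for a total of $M \cdot \operatorname{polylog}(n)$ rounds. Derandomization, if desired, replaces the random samples by a suitable $k$-perfect family of hash functions exactly as in the proof of \theoremref{thm:k-cycles}, preserving the $\operatorname{polylog}$ overhead.
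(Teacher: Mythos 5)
Your proposal is correct and follows essentially the same route as the paper: mask rows/columns to restrict the product to a candidate set, recover unique witnesses bit by bit via $O(\log n)$ equality tests, and handle the general case by random sampling at $O(\log n)$ scales with $O(\log n)$ trials each so that some sample isolates a witness with constant probability, verified against $P$ and giving $O(\log^3 n)$ product calls in total. The only differences are cosmetic (parametrising by witness count $k$ rather than by the exponent $i$ with subset size $2^i$, and complementing the bit sets).
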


The rest of this section outlines the proof of this lemma. While we have stated it for the distance product, it should be noted that the same techniques also work for the Boolean semiring matrix product.

%###
\paragraph{Preliminaries.}
%###
For matrix $S$ and index subsets $U, W \subseteq V$, we define the matrix $S(U,W)$ as
\[
S(U,W)_{uw} =
    \begin{cases}
    S_{uw} & \text{if $u \in U$ and $w \in W$,} \\
    \infty & \text{otherwise}.
    \end{cases}
\]
That is, we set all rows and columns not indexed by $U$ and $W$ to $\infty$. As before, we use $\emptyind$ as a shorthand for the whole index set $V$.

%###
\paragraph{Finding unique witnesses.}
%###
As a first step, we compute witnesses for all $(u,v)$ that have a \emph{unique} witness, that is, there is exactly one index $w$ such that $(S \star T)[u,v] = S[u,w] + T[w,v]$. To construct a candidate witness matrix $Q$, let $V^{(i)} \subseteq V$ be the set of indices $v$ such that bit $i$ in the binary presentation of $v$ is $1$. For $i = 1, 2, \dotsc, \lceil \log n \rceil$, we compute the distance product $P^{(i)} = S(\emptyind,V_i) \star T(V_i,\emptyind)$ If $P^{(i)}_{uv} = (S \star T)_{uv}$, then we set the $i^\text{th}$ bit of $Q_{uv}$ to $1$, and otherwise we set it to $0$.

If there is a unique witness for $(u,v)$, then $Q_{uv}$ is correct, and we can check if the candidate witness $Q_{uv} = w$ is correct by computing $S_{uw} + T_{wv}$. The algorithm clearly uses $O(\log n)$ matrix multiplications.

%###
\paragraph{Finding witnesses in the general case.}
%###
To find witnesses for all indices $(u,v)$, we reduce the general case to the case of unique witnesses. For simplicity, we only present a randomised version of this algorithm; for derandomisation see \citet{zwick2002all} and \citet{alon1996derandomization}.

Let $i \in \{ 0, 1, \dotsc, \lceil \log n \rceil - 1\}$. We use the following procedure to attempt to find witnesses for all $(u,v)$ that have exactly $r$ witnesses for $n / 2^{i + 1} \le r <  n / 2^{i}$:
\begin{enumerate}
    \item Let $m = \lceil c \log n \rceil$ for a sufficiently large constant $c$. For $j = 1,2, \dotsc, m$, construct a subset $V_{j} \subseteq V$ by picking $2^i$ values $v_1, v_2, \dotsc, v_{2^i}$ from $V$ with replacement, and let $V_j = \{ v_1, v_2, \dotsc, v_{2^i} \}$.
    \item For each $V_j$, use the unique witness detection for the product $S(\emptyind,V_j) \star T(V_j,\emptyind)$ to find candidate witnesses $Q_{uv}$ for all pairs $(u,v)$, and keep those $Q_{uv}$ that are witnesses for $S \star T$.
\end{enumerate}
Let $(u,v)$ be a pair with $r$ witnesses for $n / 2^{i + 1} \le r <  n / 2^{i}$. For each $j = 1, 2, \dotsc, m$, the probability that $V_j$ contains exactly one witness for $(u,v)$ is at least $(2e)^{-1}$ (see \citet{Seidel1995400}). Thus, the probability that we do not find a witness for $(u,v)$ is bounded by $(1-(2e)^{-1})^{\lceil c \log n \rceil} = n^{-\Omega(c)}$.

Repeating the above procedure for $i = 0, 1, \dotsc, \lceil \log n \rceil - 1$ ensures that the probability of not finding a witness for any fixed $(u,v)$ is at most $n^{-\Omega(c)}$. By the union bound, the probability that there is any pair of indices $(u,v)$ for which no witness is found is $n^{-\Omega(c)}$, i.e., with high probability the algorithm succeeds. Moreover, the total number of calls to the distance product is $O\bigl((\log n)^3\bigr)$, giving \lemmaref{lemma:witness}.

% Lower bounds

%!TEX root = ./0-main.tex

\section{Lower Bounds}\label{sec:lower-bounds}

%###
\paragraph{Lower bounds for matrix multiplication implementations.}
%### 
While proving unconditional lower bounds for matrix multiplication in the congested clique model 
seems to be beyond the reach of current techniques, 
as discussed in \sectionref{sec:related-work}, 
it can be shown that the results given in \theoremref{thm:mm} are essentially optimal distributed implementations of the corresponding centralised algorithms. To be more formal, let $C$ be an arithmetic circuit for matrix multiplication; we say that an \emph{implementation} of $C$ in the congested clique model is a mapping of the gates of $C$ to the nodes of the congested clique. This naturally defines a congested clique algorithm for matrix multiplication, with the wires in $C$ between gates assigned to different nodes defining the communication cost of the algorithm.

Various authors, considering different parallel models, have shown that in any implementation of the trivial $\Theta(n^3)$ matrix multiplication on a parallel machine with $P$ processors there is at least one processor that has to send or receive $\Omega(n^2 / P^{2/3})$ matrix entries \cite{AggarwalCS90_PRAM,IronyTT04_3d_lower,tiskin1998}. As these models can simulate the congested clique, a similar lower bound holds for congested clique implementations of the trivial $O(n^3)$ matrix multiplication.
In the congested clique, each processor sends and receives $n$ messages per round 
(up to logarithmic factors) and $P = n$, yielding a lower bound of $\tilde\Omega(n^{1/3})$ rounds.

The trivial $\Theta(n^3)$ matrix multiplication is optimal for circuits using only 
semiring addition and multiplication.
The task of $n \times n$ matrix multiplication over the min-plus semiring can be reduced to APSP with a constant blowup~\cite[pp.202-205]{AhoHU74_algorithms}, 
hence the above bound applies also to any APSP algorithm that only uses minimum and addition operations.
This means that current techniques for similar problems, 
like the one used in the fast MST algorithm of Lotker et al.~\cite{lotker05} cannot be extended
to solve APSP.

\begin{corollary}
Any implementation of the trivial $\Theta(n^3)$ matrix multiplication,
and any APSP algorithm which only sums weights and takes the minimum of such sums,
require $\tilde\Omega(n^{1/3})$ communication rounds in the congested clique model.
\end{corollary}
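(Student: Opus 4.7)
The plan is to import the known communication lower bound for parallel implementations of the schoolbook $\Theta(n^3)$ matrix multiplication and translate it into a round lower bound via the bandwidth of the congested clique, then reduce min-plus matrix multiplication to APSP to get the second half.

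First I would formalise what counts as an implementation. An arithmetic circuit $C$ for schoolbook matrix multiplication has the $n^3$ elementary product gates $S_{ij}T_{jk}$ and the summation gates combining them into each $P_{ik}$. An implementation is a mapping of these gates to the $n$ nodes of the clique; the communication cost is the number of wires crossing node boundaries, and the input/output locality constraint fixes where each $S_{ij}$, $T_{jk}$, and $P_{ik}$ must eventually reside. The cited results of \citet{AggarwalCS90_PRAM}, \citet{IronyTT04_3d_lower}, and \citet{tiskin1998} say that in any such implementation on $P$ processors, some processor must send or receive $\Omega(n^3/P^{2/3}) \cdot \Theta(1/n) = \Omega(n^2/P^{2/3})$ matrix entries; this is a combinatorial statement that does not depend on the communication model, only on the partitioning of gates among processors.

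Next I would plug in $P=n$ and the congested clique bandwidth. Each node has $n-1$ incident links each carrying $O(\log n)$ bits per round, so a node can send or receive at most $O(n)$ matrix entries per round, assuming entries are $O(\log n)$ bits. Combining this with the $\Omega(n^2/P^{2/3}) = \Omega(n^{4/3})$ lower bound on words handled by some node gives a round lower bound of $\Omega(n^{4/3}/n) = \Omega(n^{1/3})$, up to the logarithmic factor hidden in the word size (which is where the $\tilde\Omega$ comes from).

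For the APSP half, I would invoke the standard reduction from min-plus matrix multiplication to APSP recorded in \cite[pp.~202--205]{AhoHU74_algorithms}: given two $n \times n$ matrices $S$ and $T$ over the min-plus semiring, one builds a tripartite graph on $3n$ vertices whose edge weights encode $S$ and $T$, and then the $S \star T$ entries appear as shortest-path distances in this graph. This reduction has constant blowup in the number of vertices and preserves the restriction to using only additions and minima of weights. Hence any APSP algorithm on the congested clique that only sums weights and takes minima of such sums induces, via this reduction, an implementation of the schoolbook min-plus circuit, and the preceding lower bound applies.

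The only real obstacle is making precise that the parallel lower bounds of \cite{AggarwalCS90_PRAM,IronyTT04_3d_lower,tiskin1998} really cover the congested clique model rather than only shared-memory or BSP variants; I would argue this by noting that any round of the congested clique can be simulated in the stronger parallel models (processors with unlimited local memory exchanging $O(n)$ words per superstep), so a faster congested clique implementation would yield a faster parallel one and violate the cited bounds. The rest is arithmetic.
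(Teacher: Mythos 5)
Your proposal is correct and follows essentially the same route as the paper: import the $\Omega(n^2/P^{2/3})$ per-processor communication lower bound for implementations of the cubic algorithm from the parallel-computing literature, observe that the stronger parallel models simulate the congested clique, divide by the $O(n)$ words a node can exchange per round with $P=n$ to get $\tilde\Omega(n^{1/3})$ rounds, and transfer this to APSP via the constant-blowup reduction from min-plus matrix multiplication combined with the optimality of the $\Theta(n^3)$ algorithm for semiring circuits. The only cosmetic blemish is the spurious intermediate expression $\Omega(n^3/P^{2/3})\cdot\Theta(1/n)$, which is not how the cited bound is derived, but it lands on the correct quantity and does not affect the argument.
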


However, known results on centralised APSP and distance product computation give reasons to suspect that this bound can be broken if we allow subtraction; 
in particular, translating the recent result of \citet{williams2014apsp} might allow for running time of order $n^{1/3}/2^{\Omega(\sqrt{\log n})}$ for APSP in the congested clique.

Concerning fast matrix multiplication algorithms, 
\citet{BallardDHS12_strassen_lower} have proven lower bounds 
for parallel implementations of \emph{Strassen-like} algorithms.
Their seminal work is based on building a DAG
representing the linear combinations of the inputs before the block multiplications,
and the linear combinations of the results of the multiplications (``decoding'') as the output matrix.
The parallel computation induces an assignment of the graph vertices
to the processes,
and the edges crossing the partition represent the communication.
Using an expansion argument,
Ballard et al. show that in any partition a graph representing an $\Omega(n^\mmeaux)$ algorithm
there is a process communicating $\Omega(n^{2-2/\mmeaux})$ values.
See also~\cite{BallardDHS14_summary} for a concise account of the technique.

The lower bound holds for Strassen's algorithm,
and for a family of similar algorithms,
but not for any matrix multiplication
algorithm (See~\cite[$\S.$ 5.1.1]{BallardDHS12_strassen_lower}).
A matrix multiplication algorithm is said to be \emph{Strassen-like}
if it is recursive, its decoding graph discussed above is connected,
and it computes no scalar multiplication twice.
As each process communicates at most $O(n)$ values in a round,
the implementation of an $\Omega(n^\mmeaux)$ strassen-like algorithm
must take $\Omega(n^{1-2/\mmeaux})$ rounds.

\begin{corollary}
Any implementation of a Strassen-like matrix multiplication algorithm using $\Omega(n^{\mmeaux})$ element multiplications requires $\tilde\Omega(n^{1 - 2/\mmeaux})$ communication rounds in the congested clique model.
\end{corollary}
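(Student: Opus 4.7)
The plan is to combine the expansion-based lower bound of Ballard et al.\ with the per-round bandwidth constraint of the congested clique, mirroring the argument the excerpt already gave for the school-book matrix multiplication. First I would unfold the definition of an \emph{implementation} from the beginning of this section: a mapping of the gates of a Strassen-like arithmetic circuit $C$ computing $n \times n$ matrix multiplication with $\Omega(n^\mmeaux)$ multiplications to the $n$ nodes of the congested clique. This mapping is exactly a partition of the computation DAG of $C$ across $P = n$ processors, and every wire of $C$ between two gates on distinct nodes corresponds to a scalar value that must be communicated along the corresponding edge in some round of the algorithm.

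Next I would invoke the Strassen-like lower bound of \citet{BallardDHS12_strassen_lower}: for any partition of the DAG of such a circuit across $P$ processors, the expansion argument on the decoding subgraph forces at least one processor to have an incident cut of $\Omega(n^\mmeaux / P^{2/\mmeaux})$ wires, i.e., to send or receive that many scalar values over the course of the computation. Specialising to $P = n$ gives a node that must communicate $\Omega(n^{\mmeaux} / n^{2/\mmeaux}) = \Omega(n^{\mmeaux - 2/\mmeaux})$... wait, more carefully, the version quoted in the paragraph preceding the corollary is that for $P = n$ some node communicates $\Omega(n^{2 - 2/\mmeaux})$ values, and that is the form I would use verbatim.

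Finally I would turn this communication-volume bound into a round-complexity bound. In the congested clique each node can both send and receive at most $n$ messages of $O(\log n)$ bits per round, so under the standing assumption that matrix entries fit in $O(\log n)$ bits it can transmit only $\tilde O(n)$ scalar values per round. Dividing $\Omega(n^{2-2/\mmeaux})$ by this per-round capacity yields the claimed $\tilde \Omega(n^{1-2/\mmeaux})$ lower bound on the number of rounds, with the $\tilde \Omega$ (rather than $\Omega$) accounting exactly for the polylogarithmic slack in the per-round bandwidth. The main obstacle is the delicate bookkeeping around what exactly counts as a Strassen-like algorithm and checking that the hypotheses of the Ballard et al.\ theorem (recursive structure, connected decoding graph, no recomputed scalar multiplications) are preserved under the notion of \emph{implementation} defined above; once that is verified, the reduction to a per-node bandwidth lower bound is essentially immediate.
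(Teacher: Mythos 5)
Your proposal is correct and follows essentially the same route as the paper: cite the expansion-based result of Ballard et al.\ that some node in any partition of the circuit DAG must communicate $\Omega(n^{2-2/\mmeaux})$ values, then divide by the $\tilde O(n)$ per-round bandwidth of a congested-clique node to obtain $\tilde\Omega(n^{1-2/\mmeaux})$ rounds. Your self-correction to the $P=n$ form of the bound is the right call, and the closing caveat about verifying the Strassen-like hypotheses matches the paper's own caveat that the bound does not apply to arbitrary matrix multiplication algorithms.
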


%###
\paragraph{Lower bound for broadcast congested clique.}
%###
Recall that the broadcast congested clique is a version of the congested clique model with the additional constraint that all $n - 1$ messages sent by a node in a round must be identical.

\citet{FHW-12} have shown that approximating the diameter of an unweighted graph any better than factor $3/2$ requires $\tilde{\Omega}(n)$ rounds in the CONGEST model; the same can be applied to the broadcast congested clique. A variation of the approach was recently used by \citet{arXiv:1412.3445} to show that computing any approximation better than factor $2$ to all-pairs shortest paths in weighted graphs takes $\tilde\Omega(n)$ rounds as well. As discussed in Section~\ref{sec:routing}, $\tilde{o}(n)$-round matrix multiplication algorithms imply $\tilde{o}(n)$-round algorithms for exact unweighted and $(1+o(1))$-approximate weighted APSP. Together, this immediately implies that matrix multiplication on the broadcast congested clique is hard.

\begin{corollary}
In the broadcast congested clique model, matrix multiplication algorithms that are applicable to matrices over the Boolean semiring and APSP algorithms require $\tilde\Omega(n)$ communication rounds.
\end{corollary}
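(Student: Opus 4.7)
The strategy is to chain two ingredients: an existing APSP lower bound for the broadcast congested clique, and the matrix multiplication $\Rightarrow$ APSP reductions established in \sectionref{sec:routing}. The APSP part of the corollary is essentially immediate from the cited prior work, while the matrix multiplication part is obtained by contraposition.

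For the APSP lower bound, I would directly invoke \cite{arXiv:1412.3445}: any algorithm that achieves a better than factor-$2$ approximation of weighted APSP on the broadcast congested clique requires $\tilde\Omega(n)$ rounds. For the unweighted case one can alternatively invoke the diameter hardness of \citet{FHW-12}, since exact APSP determines the diameter and a factor-better-than-$3/2$ approximation is already hard. This takes care of the APSP half of the claim.

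For the matrix multiplication half, I would argue by contradiction: suppose there were a broadcast-congested-clique algorithm for Boolean semiring matrix multiplication (respectively, ring matrix multiplication over $\mathbb{Z}$ with $O(\log n)$-bit entries) running in $\tilde o(n)$ rounds. Then the Seidel-style reduction for unweighted undirected APSP from \sectionref{sec:routing} uses only $O(\log n)$ Boolean matrix multiplications together with local computation and the broadcasting of each vertex's degree, so it runs in $\tilde o(n)$ broadcast rounds; similarly, the rounding/embedding construction from \lemmaref{lemma:approx-dp} and \theoremref{thm:apxapsp} turns a fast ring matrix multiplication into a $(1+o(1))$-approximate weighted APSP in $n^{\mme+o(1)}$ rounds, where every non-multiplication step is either local or a broadcast of $O(\log n)$ bits per node per round. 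Either reduction would yield an $\tilde o(n)$-round algorithm that violates the APSP lower bound, contradicting the assumption.

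The main subtlety I expect is confirming that the reductions of \sectionref{sec:routing}, written originally for the unicast congested clique, go through unchanged under the broadcast restriction. This is a book-keeping check rather than a genuine obstacle: every auxiliary step in those reductions (iterated squaring, degree broadcast for \lemmaref{lemma:seidel}, assembling the matrices $S^{(i)}$ and $T^{(i)}$ before each approximate distance product) either runs locally or sends the same value from a node to all other nodes, which is exactly what the broadcast model permits. Once this verification is in place, the contrapositive of the reductions, combined with the APSP hardness result, yields the claimed $\tilde\Omega(n)$ lower bound for both Boolean matrix multiplication and APSP in the broadcast congested clique.
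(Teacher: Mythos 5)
Your proposal matches the paper's argument: the paper likewise cites \citet{FHW-12} and \citet{arXiv:1412.3445} for the $\tilde\Omega(n)$ hardness of diameter/APSP approximation in the broadcast setting, and then observes that the reductions of \sectionref{sec:routing} turn a $\tilde o(n)$-round matrix multiplication into a $\tilde o(n)$-round exact unweighted or $(1+o(1))$-approximate weighted APSP algorithm, yielding the contrapositive. Your explicit check that the non-multiplication steps of those reductions are local or broadcast-only is a verification the paper leaves implicit, but the route is the same.
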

We remark that the phrase ``that is applicable to matrices over the Boolean semiring'' refers to the issue that, in principle, it is possible that matrix multiplication exponents may be different for different underlying semirings. However, at the very least the lower bound applies matrix multiplication over Booleans, integers, and rationals, as well as the min-plus semiring. We stress that, unlike the lower bounds presented beforehand, this bound holds without any assumptions on the algorithm itself.

% NOTE: I removed the comment about F_2, as there is a randomised reduction from Boolean MM to F_2 MM; I am not sure how well the above bound holds if randomisation is allowed. --Janne 

\section{Conclusions}

In this work, we demonstrate that algebraic methods -- especially fast matrix multiplication -- can be used to design efficient algorithms in the congested clique model, resulting in algorithms that outperform the previous combinatorial algorithms; moreover, we have certainly not exhausted the known centralised literature of algorithms based on matrix multiplication, so similar techniques should also give improvements for other problems. It also remains open whether corresponding lower bounds exist; however, it increasingly looks like lower bounds for the congested clique would imply lower bounds for centralised algorithms, and are thus significantly more difficult to prove than for the CONGEST model.

While the present work focuses on a fully connected communication topology (clique), we expect that the same techniques can be applied more generally in the usual CONGEST model. For example, fast triangle detection in the CONGEST model is trivial in those areas of the network that are sparse. Only dense areas of the network are non-trivial, and in those areas we may have enough overall bandwidth for fast matrix multiplication algorithms. On the other hand, there are non-trivial lower bounds for distance computation problems in the CONGEST model~\cite{DHKNPPW-11}, though significant gaps still remain~\cite{nanongkai14}.

\section*{Acknowledgements}

Many thanks to Keijo Heljanko, Juho Hirvonen, Fabian Kuhn, Tuomo Lempi\"ainen, and Joel Rybicki for discussions.

\DeclareUrlCommand{\Doi}{\urlstyle{same}}
\renewcommand{\doi}[1]{\href{http://dx.doi.org/#1}{\footnotesize\sf doi:\Doi{#1}}}
\bibliographystyle{plainnat}
\small{
\renewcommand{\bibsep}{4pt}
\bibliography{triangles}
}

\end{document}